\documentclass{article}

\usepackage{natbib}
\usepackage{imakeidx} 

\usepackage{multirow}
\usepackage{units}
\usepackage{amsmath}
\usepackage{amsthm}
\usepackage{amssymb}
\usepackage[all]{xy}

\theoremstyle{plain}
\newtheorem{thm}{\protect\theoremname}[section]
\theoremstyle{definition}
\newtheorem{defn}[thm]{\protect\definitionname}
\theoremstyle{plain}
\newtheorem{lem}[thm]{\protect\lemmaname}

\providecommand{\definitionname}{Definition}
\providecommand{\lemmaname}{Lemma}
\providecommand{\theoremname}{Theorem}

\date{}

\begin{document}

\title{Probabilistic Contextuality\\in EPR/Bohm-type Systems with Signaling Allowed}

\author{Janne V. Kujala  \\
	University of Jyv\"askyl\"a \\
	\and 
	Ehtibar N. Dzhafarov \\
	Purdue University \\
	}

\maketitle

\begin{abstract}
In this chapter, we review a principled way of defining and measuring
contextuality in systems with deterministic inputs and random outputs,
recently proposed and developed in \citep{KujalaDzhafarovLarsson2015,DKL2015FooP}.
We illustrate it on systems with two binary inputs and two binary
random outputs, the prominent example being the system of two entangled
spin-half particles with each particle's spins (random outputs) being
measured along one of two directions (inputs). It is traditional to
say that such a system exhibits contextuality when it violates Bell-type
inequalities. Derivations of Bell-type inequalities, however, are
based on the assumption of no-signaling, more generally referred to
as marginal selectivity: the distributions of outputs (spins) in Alice's
particle do not depend on the inputs (directions chosen) for Bob's
particle. In many applications this assumption is not satisfied, so
that instead of contextuality one has to speak of direct cross-influences,
e.g., of Bob's settings on Alice's spin distributions. While in quantum
physics direct cross-influences can sometimes be prevented (e.g.,
by space-like separation of the two particles), in other applications,
especially in behavioral and social systems, marginal selectivity
almost never holds. It is unsatisfying that the highly meaningful
notion of contextuality is made inapplicable by even slightest violations
of marginal selectivity. The new approach rectifies this: it allows
one to define and measure contextuality on top of direct cross-influences,
irrespective of whether marginal selectivity (no-signaling condition)
holds. For systems with two binary inputs and two binary random outputs,
contextuality means violation of the classical CHSH inequalities in
which the upper bound 2 is replaced with 2$\left(1+\Delta_{0}\right)$,
where $\Delta_{0}$ is a measure of deviation from marginal
selectivity.
\end{abstract}


\section{\label{sec:Introduction}Introduction}

In the foundations of quantum physics the notion of contextuality
can be formulated in purely probabilistic terms within the framework
of the Kolmogorovian probability theory \citep*{Larsson2002,Khrennikov2008_Bell-Boole,Khrennikov2008_EPR-Bohm,DK2013PLoS,DK2014LNCSQualified,DK2014PLOSconditionalization,DK2014FOOP,DK2014Advances,DK2014Scripta,DKL2015FooP}.
The notion applies to any system of random variables recorded under
different (mutually incompatible) conditions. Contextuality means
that these random variables cannot be ``sewn together'' into a single
system of jointly distributed random variables if one assumes that
all or some of them preserve their identity across different conditions.
Within the Kolmogorovian framework the existence of this single joint
distribution is equivalent to the presentability of all random variables
involved as functions of one and the same (``hidden'') random variable
\citep{SuppesZanotti1981,Fine1982b,DzhafarovKujala2010}.

In spite of its long history (dating from Specker's \citeyearpar{Specker1960}
example with three boxes, contextuality does not have a standard definition
\citep{KochenSpecker1967,Laudisa1997,Spekkens2008,Kirchmair2009,Badziag2009,Khrennikov2009,Cabello2013PRL},
and is often confounded with such notions as nonlocality and lack
of realism (the notions we will not get into in this chapter). All
authors who use this term in quantum theory, however, agree on the
possibility of detecting contextuality in the spins of entangled particles
by violations of Bell-type inequalities \citep{Fine1982b,Bell1964,ClauserHorneShimonyHolt1969}.
Many other tests have been developed for systems of random variables
in and outside quantum physics, notably in psychology \citep{KujalaDzhafarov2008b,DzhafarovKujala2012a,DzhafarovKujala2012b,DzhafarovKujala2013ProcAMS}.
All of these tests are necessary (sometimes also sufficient) conditions
for non-contextuality, because of which all of them presuppose or
are directly making use of the condition known in psychology as marginal
selectivity \citep{TownsendSchweickert1989,Dzhafarov2003c} and in
quantum physics as no-signaling \citep{Cereceda2000,Masanes2006,Oas2014}.
In this chapter we use the first term, as more general and purely
probabilistic (see Section \ref{sec:Consequences-of-the}).\footnote{Within our most recent publications developing the theory, this property
is also referred to by the technical name \index{consistently connected}\emph{consistent connectedness}.} If marginal selectivity is violated, no ``sewing together'' of
the kind mentioned above is possible.

The problem associated with this fact is that in some cases (including
all cases known to us in psychology) violations of marginal selectivity
can be readily attributed to the lack of selectivity in the dependence
of random variables on various components of the conditions under
which they are recorded. If a person is asked to judge brightness
and size of a visually presented object, it is not difficult to construct
a model in which the judgment of brightness is directly influenced
by physical intensity and also directly influenced by object's physical
size. In the EPR/Bohm paradigm, if the two measurements of spins in
entangled particles are separated by a time-like interval, the spatial
axis chosen by Bob (for one of the particles) can in principle initiate
a process that will directly influence the spin recorded by Alice
(for another particle). We will refer to the dependence of an output
distribution on the ``wrong'' input as a direct cross-influence.
The Bell-type inequalities (e.g., in the CHSH form, \citealp{ClauserHorneShimonyHolt1969})
cannot be derived under direct cross-influences, and whether or not
they are violated therefore becomes irrelevant.

It seems strange and intellectually unsatisfying, however, that we
can detect contextuality when marginal selectivity holds precisely,
but we cannot speak of contextuality at all when it is violated, however
slightly. In this chapter we review (in the context of systems with
binary inputs and binary random variables as outputs) a recently proposed
definition and measure of contextuality \citep{KujalaDzhafarovLarsson2015,DKL2015FooP}
that overcome this difficulty: even in the presence of direct cross-influences
(say, from Bob's setting to Alice's measurements and vice versa) we
can detect the presence and compute the degree of contextual influences
``on top of'' the direct cross-influences. The theory can be generalized
to arbitrary systems with deterministic inputs and random outputs,
but we do not attempt to present it here. We have made an effort to
keep the presentation on a very nontechnical level. This level would
be difficult to maintain in a more systematical or more general presentation.

\section{\label{sec:The-System-The}The System $\left(\alpha,\beta,A,B\right)$}

Consider a system with two binary inputs, $\alpha,\beta$, and two
outputs that are binary random variables, $A,B$. Alice chooses the
value of $\alpha$ to be either $\alpha_{1}$ or $\alpha_{2}$, and
she records the corresponding value of $A$ as either $+1$ or $-1$.
Bob chooses the value of $\beta$ to be either $\beta_{1}$ or $\beta_{2}$,
and he records the value of $B$ as either $+1$ or $-1$. Alice and
Bob do this repeatedly in successive trials, so that each input choice
and output recording by Alice is paired with an input choice and output
recording by Bob. They send their paired choices of inputs and recordings
of the outputs to Charlie, who creates four tables of joint distributions:
for every $i\in\left\{ 1,2\right\} $ and $j\in\left\{ 1,2\right\} $,
the distribution is\begin{equation}%
\begin{tabular}{|c|cc|c|}
\cline{1-3} 
$\phi=(\alpha_{i},\beta_{j})$ & $B_{ij}=+1$ & $B_{ij}=-1$ & \multicolumn{1}{c}{}\tabularnewline
\hline 
$A_{ij}=+1$ & $\Pr\left[A_{ij}=1,B_{ij}=1\right]$ & $\ldots$ & $\Pr\left[A_{ij}=1\right]$\tabularnewline
$A_{ij}=-1$ & $\ldots$ & $\ldots$ & $\ldots$\tabularnewline
\hline 
\multicolumn{1}{c|}{} & $\Pr\left[B_{ij}=1\right]$ & $\ldots$ & \multicolumn{1}{c}{}\tabularnewline
\cline{2-3} 
\end{tabular}\end{equation}Charlie knows that the only variables that can possibly
influence $A$ are $\alpha$ and $\beta$, so he labels $A$ recorded
under conditions $\phi=\left(\alpha_{i},\beta_{j}\right)$ as $A_{ij}$,
allowing thereby $A_{ij}$ to have up to four different distributions.
Each of these distributions can be represented by $\Pr\left[A_{ij}=1\right]$,
or equivalently by the expected value $\left\langle A_{ij}\right\rangle =2\Pr\left[A_{ij}=1\right]-1$.
The notation $B_{ij}$ for Bob, and the values $\Pr\left[B_{ij}=1\right]$
and $\left\langle B_{ij}\right\rangle $ are analogous.

Charlie thus deals with eight random variables, 
\begin{equation}
A_{11},B_{11},A_{12},B_{12},A_{21},B_{21},A_{22},B_{22}.\label{eq:the 8}
\end{equation}
With respect to the joint distribution of $A_{ij}$ and $B_{ij}$,
their individual distributions are referred to as marginal. The joint
distribution for $\left(A_{ij},B_{ij}\right)$ is uniquely determined
by the two marginal probabilities and the joint probability $\Pr\left[A_{ij}=+1\textnormal{ and }B_{ij}=+1\right]$.
Equivalently, it is determined by the two expected values $\left\langle A_{ij}\right\rangle ,\left\langle B_{ij}\right\rangle $
and the product expected value 
\begin{equation}
\left\langle A_{ij}B_{ij}\right\rangle =\Pr\left[A_{ij}=B_{ij}\right]-\Pr\left[A_{ij}\not=B_{ij}\right].
\end{equation}

\section{\label{sec:Selectivity-of-influences}Selectivity of influences and
marginal selectivity}

\index{selective influence}Let us assume that Charlie, based on some theory, expects that the
dependence of $A,B$ on $\alpha,\beta$ is selective: Bob's choice
of $\beta$ value does not influence Alice's $A$ and vice versa:
\begin{equation}
\begin{array}{c}
\xymatrix{\alpha\ar[d] & \beta\ar[d]\\
A & B
}
\end{array}\label{eq:diagram selective}
\end{equation}
This means that $A_{i1}$ and $A_{i2}$ are one and the same random
variable for every $i\in\left\{ 1,2\right\} $, and so are $B_{1j}$
and $B_{2j}$ for every $j\in\left\{ 1,2\right\} $. Charlie can therefore
relabel $A_{ij}$ into $A_{i}$ and $B_{ij}$ into $B_{j}$. But he
can also approach this in a more cautious way. He can retain the double
indexation and ask the following question: given the eight random
variables in (\ref{eq:the 8}) of which we know the expectations 
\begin{equation}
\left(\left\langle A_{ij}B_{ij}\right\rangle ,\left\langle A_{ij}\right\rangle ,\left\langle B_{ij}\right\rangle \right),\;i,j\in\left\{ 1,2\right\} ,\label{eq:expectations observed}
\end{equation}
can we impose a joint distribution on these eight random variables\footnote{To impose a joint distribution on (\ref{eq:the 8}) means to create
a vector of jointly distributed $A'_{11}$, $B'_{11}$, $\ldots$,
$A'_{22}$, $B'_{22}$ called a \index{coupling}coupling for (\ref{eq:the 8}), such
that the pairs $\left(A'_{ij},B'_{ij}\right)$ have the same distributions
as $\left(A_{ij},B_{ij}\right)$ for all $i,j\in\left\{ 1,2\right\} $.
No other subset of (\ref{eq:the 8}) has a joint distribution. In
this chapter we conveniently confuse random variables and their primed
counterparts. See \citep{DK2013PLoS,DK2014LNCSQualified,DK2014PLOSconditionalization,DK2014FOOP,DK2014Advances,DzhafarovKujala2010,DzhafarovKujala_handbook}
for detailed discussions. } such that 
\begin{equation}
\begin{array}{cc}
\Pr\left[A_{i1}\not=A_{i2}\right]=0 & \textnormal{ for }i\in\left\{ 1,2\right\} \\
\Pr\left[B_{1j}\not=B_{2j}\right]=0 & \textnormal{ for }j\in\left\{ 1,2\right\} 
\end{array}?\label{eq:identity connection}
\end{equation}
If the answer is affirmative, then the situation is equivalent to
the existence of a joint distribution of the single-indexed $A_{1},B_{1},A_{2},B_{2}$
such that 
\begin{equation}
\left(\left\langle A_{i}B_{j}\right\rangle ,\left\langle A_{i}\right\rangle ,\left\langle B_{j}\right\rangle \right)=\left(\left\langle A_{ij}B_{ij}\right\rangle ,\left\langle A_{ij}\right\rangle ,\left\langle B_{ij}\right\rangle \right),\;i,j\in\left\{ 1,2\right\} .
\end{equation}

However, and this is the reason we call Charlie's approach cautious,
the answer does not have to be affirmative. One situation that precludes
this is if the following equalities are violated at least for one
$i$ or one $j$: 
\begin{equation}
\left\langle A_{i1}\right\rangle =\left\langle A_{i2}\right\rangle ,\;\left\langle B_{1j}\right\rangle =\left\langle B_{2j}\right\rangle .\label{eq:marginal invariance}
\end{equation}\index{marginal selectivity}%
These equalities represent marginal selectivity of $A$ with respect
to changes in $\beta$ and of $B$ with respect to changes in $\alpha$.
This marginal selectivity is an obvious consequence of (\ref{eq:identity connection}).
If, e.g., $\left\langle A_{11}\right\rangle $ were different from
$\left\langle A_{12}\right\rangle $, then, as Bob changes the value
of $\beta$ from $\beta_{1}$ to $\beta_{2}$, Alice's distribution
of $A$ for one and the same choice of $\alpha=\alpha_{1}$ changes.
$A_{11}$and $A_{12}$ cannot therefore be always equal, contravening
(\ref{eq:identity connection}).

In situations like this Charlie is forced then to revise his model
(\ref{eq:diagram selective}) in favor of 
\begin{equation}
\begin{array}{c}
\xymatrix{\alpha\ar[d]\ar[dr] & \beta\ar[d]\ar[dl]\\
A & B
}
\end{array}.\label{eq:diagrams cross}
\end{equation}
This can be referred to as a model with direct cross-influences: the
distribution (hence also identity) of the outputs is allows to be
influenced by ``wrong'' inputs (``wrong'' from the point of view
of the Charlie's original theory\footnote{This is the ``subjective'', or theory-laden aspect of the notion
of contextuality: this notion acquires its meaning only in relation to some
model, in this case represented by (\ref{eq:diagram selective}),
that describes the system the way it ``ought to be'' or predicted
to be by some theory. We will not elaborate, but this accords with
our view \citep{DK2014Scripta} that while probabilities are objective,
the identities of random variables are theory-laden.}).

\section{\label{sec:Contextuality-under-marginal}Contextuality under marginal
selectivity}\index{contextuality!---, under marginal selectivity}

There is also another possibility for Charlie's question to have a
negative answer. The marginal selectivity requirement may very well
be satisfied, but the observed expectations (\ref{eq:expectations observed})
may be incompatible with the hypothesis (\ref{eq:identity connection}).
The incompatibility means that a joint distribution of the eight random
variables (\ref{eq:the 8}) that accords with both (\ref{eq:expectations observed})
and (\ref{eq:identity connection}) does not exist. This understanding
of contextuality was first utilized by \citet{Larsson2002}. It helps
to understand the essence of all Bell-type theorems. Stated in the
form convenient for our purposes, the theorem that applies to all
systems with two binary inputs and two binary random outputs \citet{Fine1982b}
says: 
\begin{thm}
[Fine, 1982]\label{thm:Fine}The observed expectations (\ref{eq:expectations observed})
are compatible with the identity connections (\ref{eq:identity connection})
if and only if marginal selectivity (\ref{eq:marginal invariance})
is satisfied for all $i,j\in\left\{ 1,2\right\} $, and 
\begin{equation}
\max_{i,j\in\left\{ 1,2\right\} }\left|\left\langle A_{11}B_{11}\right\rangle +\left\langle A_{12}B_{12}\right\rangle +\left\langle A_{21}B_{21}\right\rangle +\left\langle A_{22}B_{22}\right\rangle -2\left\langle A_{ij}B_{ij}\right\rangle \right|\leq2.\label{eq:Bell/Fine/CHSH}
\end{equation}
\end{thm}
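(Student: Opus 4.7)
The plan is to prove the stated equivalence in the usual two directions, with the forward (necessity) direction being routine and the converse (sufficiency) being the substantive content, originally due to Fine.

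For necessity, assume the existence of a coupling on the eight random variables in (\ref{eq:the 8}) realizing the identity connections (\ref{eq:identity connection}). Since $A_{i1}=A_{i2}$ almost surely, their expectations coincide, and likewise for the $B$'s, which yields marginal selectivity (\ref{eq:marginal invariance}). Collapsing the eight variables to four jointly distributed $\pm 1$ variables $A_1,A_2,B_1,B_2$, we have $\langle A_{ij}B_{ij}\rangle=\langle A_iB_j\rangle$, and the CHSH inequalities follow from the pointwise identity
\[
A_1B_1+A_1B_2+A_2B_1-A_2B_2=A_1(B_1+B_2)+A_2(B_1-B_2),
\]
because $B_1,B_2\in\{\pm 1\}$ forces one of the two factors on the right to vanish and the other to have absolute value $2$, so the left side is bounded by $2$ in absolute value. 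Taking expectations and permuting which of the four correlations carries the minus sign yields the four inequalities encoded in (\ref{eq:Bell/Fine/CHSH}).

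For sufficiency, assume marginal selectivity and the four CHSH inequalities. By marginal selectivity we may work with single-indexed $A_i,B_j$, and the task is to produce a probability measure $p$ on $\{\pm 1\}^4$ whose four pairwise marginals match the given expectations. I would expand the unknown joint in the Fourier basis of $\pm 1$ characters,
\[
p(a_1,a_2,b_1,b_2)=\frac{1}{16}\Bigl(1+\sum_{k} c_{k}\,\chi_{k}(a_1,a_2,b_1,b_2)\Bigr),
\]
where the single- and pairwise-character coefficients $c_k$ are pinned down by the observed marginals, while the four triple-character coefficients and the single four-character coefficient remain free. The problem then becomes: can these five free parameters be chosen so that all sixteen probabilities $p(a_1,a_2,b_1,b_2)$ are non-negative?

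The main obstacle is to verify that the CHSH inequalities are not only necessary but also sufficient for this non-negativity problem to be feasible. This is a finite linear-programming question which I would settle by Fourier--Motzkin elimination of the five free parameters, or equivalently by exhibiting an explicit convex decomposition of the desired joint into the sixteen deterministic strategies $(a_1,a_2,b_1,b_2)\in\{\pm 1\}^4$. In either formulation the surviving linear constraints on the fixed marginal data turn out to be exactly (\ref{eq:marginal invariance}) together with the four CHSH inequalities in (\ref{eq:Bell/Fine/CHSH}); trivial bounds such as $|\langle A_i\rangle|\le 1$ and $|\langle A_iB_j\rangle|\le 1$ hold automatically for observed expectations of $\pm 1$ variables and thus drop out. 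The bookkeeping is tedious but elementary, and it is this reduction that carries the real mathematical weight of Fine's theorem.
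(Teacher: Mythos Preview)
Your outline is sound and follows the classical direct route to Fine's theorem. The paper itself does not supply a standalone proof: Theorem~\ref{thm:Fine} is simply cited from Fine (1982), and its content is recovered only later as the marginal-selective special case ($\Delta_0=0$) of the more general, computer-assisted machinery in Section~\ref{sec:techdetails} (Lemmas~\ref{lem:1} and~\ref{lem:2}), which works with the full probability polytope via facet enumeration followed by Fourier--Motzkin elimination. So your approach is more elementary and self-contained than what the paper offers, while the paper's method is more general (it handles the signaling case uniformly) at the cost of relying on computer verification.

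There is, however, a concrete slip in your Fourier bookkeeping. You assert that ``the single- and pairwise-character coefficients $c_k$ are pinned down by the observed marginals,'' but this is false: of the six pairwise characters on $\{\pm 1\}^4$, only the four cross-correlations $\langle A_iB_j\rangle$ are observed, while $\langle A_1A_2\rangle$ and $\langle B_1B_2\rangle$ are \emph{not} determined by any of the four experiments (Alice never measures both her settings in one trial, nor does Bob). Hence you have seven free parameters to eliminate, not five. This does not break the method---extra freedom can only help feasibility---but it is worth noting that these two overlooked parameters are exactly the ``connection expectations'' the paper singles out and for which Lemma~\ref{lem:1} states the precise compatibility conditions. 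Once you correct the count, your Fourier--Motzkin elimination over seven variables still collapses the sixteen non-negativity constraints down to the CHSH inequalities (plus the automatic bounds $|\langle A_i\rangle|,|\langle B_j\rangle|,|\langle A_iB_j\rangle|\le 1$), and the sufficiency argument goes through.
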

The term \index{connection}``connections'' used in this formulation \citep{DK2013PLoS,DK2014LNCSQualified,DK2014PLOSconditionalization,DK2014FOOP}
refers to the unobservable pairs 
\begin{equation}
\left(A_{11},A_{12}\right),\left(A_{21},A_{22}\right),\left(B_{11},B_{21}\right),\left(B_{12},B_{22}\right).
\end{equation}
Their unobservable joint distributions are given by\begin{equation}%
\begin{tabular}{c|cc|c}
\cline{2-3} 
 & $A_{i2}=+1$ & $A_{i2}=-1$ & \tabularnewline
\hline 
\multicolumn{1}{|c|}{$A_{i1}=+1$} & $\Pr\left[A_{i1}=1,A_{i2}=1\right]$ & $\ldots$ & \multicolumn{1}{c|}{$\Pr\left[A_{i1}=1\right]$}\tabularnewline
\multicolumn{1}{|c|}{$A_{i1}=-1$} & $\ldots$ & $\ldots$ & \multicolumn{1}{c|}{$\ldots$}\tabularnewline
\hline 
 & $\Pr\left[A_{i2}=1\right]$ & $\ldots$ & \tabularnewline
\cline{2-3} 
\multicolumn{1}{c}{} &  & \multicolumn{1}{c}{} & \tabularnewline
\cline{2-3} 
 & $B_{2j}=+1$ & $B_{2j}=-1$ & \tabularnewline
\hline 
\multicolumn{1}{|c|}{$B_{1j}=+1$} & $\Pr\left[B_{1j}=1,B_{2j}=1\right]$ & $\ldots$ & \multicolumn{1}{c|}{$\Pr\left[B_{1j}=1\right]$}\tabularnewline
\multicolumn{1}{|c|}{$B_{1j}=-1$} & $\ldots$ & $\ldots$ & \multicolumn{1}{c|}{$\ldots$}\tabularnewline
\hline 
 & $\Pr\left[B_{2j}=1\right]$ & $\ldots$ & \tabularnewline
\cline{2-3} 
\end{tabular}\end{equation}for $i,j\in\left\{ 1,2\right\} $. If (\ref{eq:identity connection})
holds, i.e., the entries on the minor diagonals of the tables are
zero, then the connections are called the identity ones.

The compatibility of connections with the observed expectations (uniquely
defining the observed distributions) means that each of the $2^{8}$
possible combinations 
\[
A_{11}=\pm1,B_{11}=\pm1,\ldots,A_{22}=\pm1,B_{22}=\pm1
\]
is assigned a probability, so that the probabilities for all combinations
containing, say, $A_{12}=1$ and $B_{12}=-1$ sum to the observed
$\Pr\left[A_{12}=1,B_{12}=-1\right]$; and the probabilities for all
combinations containing, say, $B_{12}=1$ and $B_{22}=1$ equals the
hypothetical (unobservable) connection probability $\Pr\left[B_{12}=1,B_{22}=1\right]$.

The inequalities (\ref{eq:Bell/Fine/CHSH}), in physics referred to
as CHSH inequalities, can be violated, and they are de facto violated
if $A$ and $B$ are spins of two entangled particles under certain
choices of spatial axes ($\alpha$ and $\beta$) along which they
are measured \citep{AspectGrangierRoger1981,AspectGrangierRoger1982,Weihs1998}.
When these inequalities are violated while marginal selectivity is
satisfied, we speak of contextuality: Alice's output $A$ under her
choice of $\alpha_{1}$ does not change its distribution depending
on Bob's choice of $\beta_{1}$ or $\beta_{2}$, but $A_{11}$ and
$A_{12}$ still cannot be considered one and the same random variable
(it should not come as a surprise that different random variables
can have the same distribution).

In the diagram below the interrupted lines indicate contextual influences:
the dependence of identities of identically distributed random variables
on the ``wrong'' inputs: 
\begin{equation}
\begin{array}{c}
\xymatrix{\alpha\ar[d]\ar@{-->}[dr] & \beta\ar[d]\ar@{-->}[dl]\\
A & B
}
\end{array}
\end{equation}
When the inequalities (\ref{eq:Bell/Fine/CHSH}) are violated, a measure
of contextuality can be easily designed as follows. If (\ref{eq:identity connection})
were compatible with the observed expectations (\ref{eq:expectations observed}),
then (by definition) Charlie could construct a joint distribution
of the random variables (\ref{eq:the 8}) in which

\begin{equation}
\Delta=\Pr\left[A_{11}\not=A_{12}\right]+\Pr\left[A_{21}\not=A_{22}\right]+\Pr\left[B_{11}\not=B_{21}\right]+\Pr\left[B_{12}\not=B_{22}\right]\label{eq:C under MS}
\end{equation}
equals zero. If (\ref{eq:identity connection}) is incompatible with
(\ref{eq:expectations observed}), then this $\Delta$ cannot be zero
in any joint distribution imposed on (\ref{eq:the 8}). It is natural
therefore to adopt the following 
\begin{defn}
\label{def:contextuality under MS}Under marginal selectivity, the
degree of contextuality in a system with given observed expectations
(\ref{eq:expectations observed}) is the minimal value of $\Delta$
in (\ref{eq:C under MS}) for which a joint distribution for (\ref{eq:the 8})
exists. 
\end{defn}
As it turns out, this minimal value of $\Delta$ equals 
\begin{equation}
\Delta_{\min}=\max\left\{ 0,\Delta_{\textnormal{CHSH}}\right\} ,\label{eq:C_min}
\end{equation}
where 
\begin{equation}
  \begin{split}
    &\Delta_{\textnormal{CHSH}}=\\
    &{\textstyle\frac{1}{2}}\max_{i,j\in\left\{ 1,2\right\} }\left|\left\langle A_{11}B_{11}\right\rangle +\left\langle A_{12}B_{12}\right\rangle +\left\langle A_{21}B_{21}\right\rangle +\left\langle A_{22}B_{22}\right\rangle -2\left\langle A_{ij}B_{ij}\right\rangle \right|-1\label{eq:C^0 general}
  \end{split}
\end{equation}
is ($\nicefrac{1}{2}$ times) the violation of the CHSH inequalities.
This is a special case of the formula derived later in Theorem \ref{thm:C_min}
without the assumption of marginal selectivity.

As an example, let the observed expectations be at the Tsirelson bounds
\citep{Tsirelson1980,Landau1987}. Then $\Delta_{\min}$ is $\sqrt{2}-1$.
The largest possible value of $\Delta_{\min}$ is 1.

\section{\label{sec:Contextuality-on-top}Contextuality on top of direct cross-influences}\index{contextuality!---, on top of direct cross-influences}

The definition of contextuality given above does not work for the
situation depicted in (\ref{eq:diagrams cross}), where marginal selectivity
is not satisfied. In this case we have direct cross-influences from
``wrong'' inputs, and this precludes the possibility that $\Delta$
in (\ref{eq:C under MS}) is zero. In fact, we have the simple 
\begin{thm}
\label{thm:C_0}Given the observed expectations $\left(\left\langle A_{ij}\right\rangle ,\left\langle B_{ij}\right\rangle \right)_{i,j\in\left\{ 1,2\right\} }$,
the minimum possible value for $\Delta$ in (\ref{eq:C under MS})
is 
\begin{equation}
  \begin{split}
    \Delta_{0}=\textstyle\frac{1}{2}(&\left|\left\langle A_{11}\right\rangle -\left\langle A_{12}\right\rangle \right|+\left|\left\langle A_{21}\right\rangle -\left\langle A_{22}\right\rangle \right|+\\
    &\left|\left\langle B_{11}\right\rangle -\left\langle B_{21}\right\rangle \right|+\left|\left\langle B_{12}\right\rangle -\left\langle B_{22}\right\rangle \right|).\label{eq:C_0 general}
  \end{split}
\end{equation}
\end{thm}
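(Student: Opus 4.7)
I would split the claimed equality into a lower bound $\Delta\geq\Delta_{0}$ that must hold for every joint distribution realizing the prescribed expectations, together with a matching upper bound delivered by an explicit construction.

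For the lower bound, the key is the elementary inequality
\[
\Pr[X\neq Y]\;\geq\;\tfrac{1}{2}\bigl|\langle X\rangle-\langle Y\rangle\bigr|
\]
valid for any two $\pm1$-valued random variables $X,Y$ on a common probability space; it follows at once from $|\langle X\rangle-\langle Y\rangle|=|\mathrm{E}[X-Y]|\leq\mathrm{E}[|X-Y|]=2\Pr[X\neq Y]$. Applied in turn to each of the four connection pairs $(A_{i1},A_{i2})$ for $i\in\{1,2\}$ and $(B_{1j},B_{2j})$ for $j\in\{1,2\}$, and summed, this gives $\Delta\geq\Delta_{0}$ immediately.

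For the matching upper bound I would exhibit an explicit joint distribution. Introduce two independent uniform random variables $U,V$ on $[0,1]$, write $p_{A_{ij}}=(1+\langle A_{ij}\rangle)/2$ and $p_{B_{ij}}=(1+\langle B_{ij}\rangle)/2$, and set $A_{ij}=\mathrm{sign}(p_{A_{ij}}-U)$ and $B_{ij}=\mathrm{sign}(p_{B_{ij}}-V)$. Each individual marginal comes out correct by construction, and for each $i$ the pair $(A_{i1},A_{i2})$ consists of two monotone thresholds of the same $U$, so that $\Pr[A_{i1}\neq A_{i2}]=|p_{A_{i1}}-p_{A_{i2}}|$ exactly; the analogous identity for each $B$-connection follows via $V$. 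Summing the four terms yields $\Delta=\Delta_{0}$.

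The step most worth flagging is not a calculation but a conceptual point: the class of distributions over which the minimum is sought. The threshold construction above reproduces the prescribed individual marginal expectations $\langle A_{ij}\rangle$ and $\langle B_{ij}\rangle$ but in general does \emph{not} reproduce the observed bivariate joints of $(A_{ij},B_{ij})$. Accordingly, $\Delta_{0}$ is best read as the minimum of $\Delta$ forced by the marginal expectations alone---the unavoidable ``floor'' on $\Delta$ created by the violations of marginal selectivity. Imposing the bivariate constraints can only push the actual minimum above $\Delta_{0}$, and the excess $\Delta_{\min}-\Delta_{0}$ is precisely the contextual contribution that Theorem~\ref{thm:C_min} will later quantify.
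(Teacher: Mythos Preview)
Your proof is correct and is essentially the same argument as the paper's: both minimize each of the four connection terms separately via the optimal (maximal) coupling of two Bernoulli variables, exploiting the fact that the four connection pairs $(A_{i1},A_{i2})$, $(B_{1j},B_{2j})$ are disjoint so the minima can be attained simultaneously. The paper does this by inspecting the $2\times 2$ table directly and maximizing the diagonal entry, whereas you split the argument into a clean lower bound via $|\mathrm{E}[X-Y]|\le\mathrm{E}|X-Y|$ and an explicit quantile-coupling construction for attainability; the mathematical content is identical. Your closing paragraph is exactly right and matches the paper's intent: the hypothesis of Theorem~\ref{thm:C_0} involves only the single-variable marginals, so $\Delta_0$ is the floor imposed by those alone, and the possible gap created by the bivariate constraints is precisely what Theorem~\ref{thm:C_min} picks up.
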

\begin{proof}
We minimize $\Delta$ if we minimize separately $\Pr\left[A_{11}\not=A_{12}\right]$,
$\Pr\left[A_{21}\not=A_{22}\right]$, $\Pr\left[B_{11}\not=B_{21}\right]$,
and $\Pr\left[B_{12}\not=B_{22}\right]$. Consider, e.g., the distribution
of the connection $\left(A_{11},A_{12}\right)$:\begin{equation}%
\begin{tabular}{|c|ccc|}
\cline{2-4} 
\multicolumn{1}{c|}{} & $A_{12}=+1$ && $A_{12}=-1$\tabularnewline
\hline 
\!\!$A_{11}=+1$\!\! & \!\!$\Pr\left[A_{11}=1,A_{12}=1\right]$ & \multicolumn{2}{c|}{\quad $\Pr\left[A_{11}=1\right]-\Pr\left[A_{11}=1,A_{12}=1\right]$\!\!}\tabularnewline
\!\!$A_{11}=-1$\!\! & \multicolumn{2}{c}{\!\!$\Pr\left[A_{12}=1\right]-\Pr\left[A_{11}=1,A_{12}=1\right]$\quad } & $\ldots$\!\!\tabularnewline
\hline 
\end{tabular}\end{equation}The largest possible value for the probability $\Pr\left[A_{11}=1,A_{12}=1\right]$
is 
$$\min\left\{ \Pr\left[A_{11}=1\right],\Pr\left[A_{12}=1\right]\right\} ,$$
whence the minimum of $\Pr\left[A_{11}\not=A_{12}\right]$, which
is the sum of the entries on the minor diagonal, is $\left|\Pr\left[A_{11}=1\right]-\Pr\left[A_{12}=1\right]\right|=\frac{1}{2}\left|\left\langle A_{11}\right\rangle -\left\langle A_{12}\right\rangle \right|$. 
\end{proof}
Under the marginal selectivity we have $\Delta_{0}=0$, and we speak
of contextuality if the minimal value of $\Delta$ that is compatible
with the observed expectations (\ref{eq:expectations observed}) is
greater than $\Delta_{0}=0$. In the general case $\Delta_{0}>0$,
and we need a more general definition of contextuality. The idea is
simple. If $\Delta_{0}>0$, we have direct cross-influences (\ref{eq:diagrams cross}),
and if $\Delta=\Delta_{0}$ is compatible with the observed expectations
(\ref{eq:expectations observed}), then no contextuality is involved:
direct cross-influences is all one needs to account for the system's
behavior. If however $\Delta=\Delta_{0}$ is not compatible with the
observed expectations (\ref{eq:expectations observed}), then we can
speak of contextuality ``on top of'' the direct cross-influences.
The natural measure of the degree of contextuality then is given by 
\begin{defn}
\label{def:contextuality general}The degree of contextuality in a
system with given observed expectations (\ref{eq:expectations observed})
is $\Delta_{\min}-\Delta_{0}$, where $\Delta_{\min}$ is the minimal
value of $\Delta$ in (\ref{eq:C under MS}) for which a joint distribution
for (\ref{eq:the 8}) exists. 
\end{defn}

\section{\label{sec:General-formula-for}General formula for contextuality}

We now need to derive a formula for $\Delta_{\min}$ of which (\ref{eq:C_min})
is a special case. 
\begin{thm}
\label{thm:C_min}The minimum possible value $\Delta_{\min}$ for
$\Delta$ that is compatible with the observed expectations (\ref{eq:expectations observed})
is 
\begin{equation}
\Delta_{\min}=\max\left\{ \Delta_{0},\Delta_{\textnormal{CHSH}}\right\} ,\label{eq:C_min general}
\end{equation}
where $\Delta_{0}$ is given in (\ref{eq:C_0 general}) and $\Delta_{\textnormal{CHSH}}$
in (\ref{eq:C^0 general}).\end{thm}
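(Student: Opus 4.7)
I would split the claim into a lower bound $\Delta \geq \max\{\Delta_0, \Delta_{\text{CHSH}}\}$ valid in every coupling, and a matching existence construction.

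The inequality $\Delta \geq \Delta_0$ is essentially free: any coupling of the eight random variables that respects the four observed joint distributions of $(A_{ij},B_{ij})$ in particular respects the eight individual expectations $\langle A_{ij}\rangle,\langle B_{ij}\rangle$, so the bound of Theorem \ref{thm:C_0} applies verbatim and gives $\Delta \geq \Delta_0$.

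For the second bound $\Delta \geq \Delta_{\text{CHSH}}$ I would run a CHSH-style comparison. Fix one signed sum, say $T = \langle A_{11}B_{11}+A_{12}B_{12}+A_{21}B_{21}-A_{22}B_{22}\rangle$. In any coupling, introduce the auxiliary quantity $\widetilde T = A_{12}(B_{11}+B_{22}) + A_{21}(B_{11}-B_{22})$; since $B_{11},B_{22}\in\{\pm1\}$, one of $B_{11}\pm B_{22}$ is zero and the other is $\pm2$, so $|\widetilde T|\leq 2$ pointwise. The difference $T-\widetilde T$ telescopes into four terms of the form $(A_{ij}-A_{ij'})B_{kl}$ or $A_{ij}(B_{kl}-B_{k'l})$, each bounded in absolute value by $2$ times the indicator of the corresponding connection mismatch. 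Taking expectations yields $\langle T\rangle\leq 2+2\Delta$, and running the argument for all four sign patterns (and their negatives) gives $\Delta \geq \frac{1}{2}\max_{i,j}|\cdots|-1 = \Delta_{\text{CHSH}}$.

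The main obstacle, and the substantive part of the theorem, is constructing a coupling that attains $\Delta = \max\{\Delta_0,\Delta_{\text{CHSH}}\}$. I would approach this by gluing. Using the Fr\'echet--Hoeffding optimal coupling from the proof of Theorem \ref{thm:C_0}, couple each pair $(A_{i1},A_{i2})$ and $(B_{1j},B_{2j})$ so that the mismatch probabilities sum to $\Delta_0$, then attempt to combine these four two-dimensional couplings with the four observed $(A_{ij},B_{ij})$ distributions into a single distribution on $\{\pm1\}^{8}$. I expect this to succeed without further inflation of $\Delta$ exactly when $\Delta_{\text{CHSH}}\leq \Delta_0$; when $\Delta_{\text{CHSH}}>\Delta_0$, the CHSH slack forces additional mismatch mass to be spread, and the minimum total inflation should be precisely $\Delta_{\text{CHSH}}-\Delta_0$. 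The cleanest route to the latter is probably LP duality on the eight-variable coupling polytope, with the two lower bounds furnishing tight dual certificates in their respective regimes; verifying optimality then reduces to a finite case analysis over the sixteen sign patterns of the four connection indicators.
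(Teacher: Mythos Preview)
Your two lower bounds are correct and, in fact, cleaner than what the paper does: the paper derives both directions simultaneously from a computer-assisted characterization (Lemma~\ref{lem:2}), whereas your telescoping argument for $\Delta\ge\Delta_{\text{CHSH}}$ is a direct CHSH bound adapted to the eight-variable setting, and the bound $\Delta\ge\Delta_0$ via Theorem~\ref{thm:C_0} is immediate. That half would stand on its own as an improvement in exposition.

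The gap is in achievability. You correctly observe that hitting $\Delta=\Delta_0$ forces the Fr\'echet--Hoeffding maximal coupling on each of the four connections (since each summand is then at its individual minimum). But you then need to show that these specific connection marginals, together with the four observed $(A_{ij},B_{ij})$ laws, extend to a joint distribution on $\{\pm1\}^8$, and that this extension exists precisely when $\Delta_{\text{CHSH}}\le\Delta_0$; in the complementary regime you need a coupling achieving $\Delta=\Delta_{\text{CHSH}}$. ``I expect this to succeed'' and ``should be precisely'' are the crux of the theorem, not steps one can defer. Your proposed fallback---LP duality on the coupling polytope---is exactly the paper's route: Lemma~\ref{lem:1} describes the polytope of compatible connection expectations by facet enumeration, and Lemma~\ref{lem:2} then eliminates the four connection-product variables by Fourier--Motzkin to obtain an explicit interval of feasible $\Delta$ values (two lower bounds, two upper bounds). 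The paper checks that the interval is always nonempty, so $\Delta_{\min}$ equals the larger lower bound. Without either carrying out that elimination or exhibiting explicit couplings in each regime, your achievability argument is incomplete; and since the paper itself resorts to computer assistance here, a purely analytic construction (if you find one) would be genuinely new.
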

\begin{proof}
By Lemma \ref{lem:2} (a computer-assisted result detailed in the next section), $\Delta$ is compatible with the observed $\left(\left\langle A_{ij}B_{ij}\right\rangle ,\left\langle A_{ij}\right\rangle ,\left\langle B_{ij}\right\rangle \right)_{i,j\in\left\{ 1,2\right\} }$
if and only if it satisfies 
\begin{align}
  \Delta &\textstyle \ge-1+\frac{1}{2}s_{1}\left(\left\langle A_{11}B_{11}\right\rangle ,\left\langle A_{12}B_{12}\right\rangle ,\left\langle A_{21}B_{21}\right\rangle ,\left\langle A_{22}B_{22}\right\rangle \right),\label{eq:S_lower_from_r}\\
  \begin{split}
    \Delta &\textstyle \ge\frac{1}{2}(
    \left|\left\langle A_{11}\right\rangle -\left\langle A_{12}\right\rangle \right|+\left|\left\langle A_{21}\right\rangle -\left\langle A_{22}\right\rangle \right|+\\
    &\textstyle\phantom{{}\ge2(}\left|\left\langle B_{11}\right\rangle -\left\langle B_{21}\right\rangle \right|+\left|\left\langle B_{12}\right\rangle -\left\langle B_{22}\right\rangle \right|),
  \end{split}
  \label{eq:S_lower_from_ab}\\
  \Delta &\textstyle \le4-\left[-1+\frac{1}{2}s_{1}\left(\left\langle A_{11}B_{11}\right\rangle ,\left\langle A_{12}B_{12}\right\rangle ,\left\langle A_{21}B_{21}\right\rangle ,\left\langle A_{22}B_{22}\right\rangle \right)\right],\label{eq:S_upper_from_r}\\
  \begin{split}
    \Delta &\textstyle \le4-\frac{1}{2}(\left|\left\langle A_{11}\right\rangle +\left\langle A_{12}\right\rangle \right|+\left|\left\langle A_{21}\right\rangle +\left\langle A_{22}\right\rangle \right|+\\
    &\textstyle\phantom{{}\le4-2(}\left|\left\langle B_{11}\right\rangle +\left\langle B_{21}\right\rangle \right|+\left|\left\langle B_{12}\right\rangle +\left\langle B_{22}\right\rangle \right|),\label{eq:S_upper_from_ab}
  \end{split}
\end{align}
where $s_{1}(\cdots)$ is defined in \eqref{eq:s1} in Sec.~\ref{sec:techdetails} below and is equal to
the $\max\left|\ldots\right|$-part of \eqref{eq:C^0 general}. These
inequalities are always mutually compatible, whence $\Delta_{\min}$
is the larger of the two right-hand expressions in (\ref{eq:S_lower_from_r})
and (\ref{eq:S_lower_from_ab}). 
\end{proof}
It follows that $\Delta_{\min}-\Delta_{0}$ is always nonnegative,
and Definition \ref{def:contextuality general} is well-constructed:
$\Delta_{\min}-\Delta_{0}=0$ indicates no contextuality, $\Delta_{\min}-\Delta_{0}>0$
indicates contextuality on top of the direct cross-influences.

We can present the notion of (non-)contextuality in as close a form
as possible to the traditional CHSH inequalities: 
\begin{thm}
The system exhibits no contextuality if and only if 
\begin{equation}
\begin{split}
\left|\left\langle A_{11}B_{11}\right\rangle +\left\langle A_{12}B_{12}\right\rangle +\left\langle A_{21}B_{21}\right\rangle -\left\langle A_{22}B_{22}\right\rangle \right|&\leq2\left(1+\Delta_{0}\right),\\
\left|\left\langle A_{11}B_{11}\right\rangle +\left\langle A_{12}B_{12}\right\rangle -\left\langle A_{21}B_{21}\right\rangle +\left\langle A_{22}B_{22}\right\rangle \right|&\leq2\left(1+\Delta_{0}\right),\\
\left|\left\langle A_{11}B_{11}\right\rangle -\left\langle A_{12}B_{12}\right\rangle +\left\langle A_{21}B_{21}\right\rangle +\left\langle A_{22}B_{22}\right\rangle \right|&\leq2\left(1+\Delta_{0}\right),\\
\left|-\left\langle A_{11}B_{11}\right\rangle +\left\langle A_{12}B_{12}\right\rangle +\left\langle A_{21}B_{21}\right\rangle +\left\langle A_{22}B_{22}\right\rangle \right|&\leq2\left(1+\Delta_{0}\right),
\end{split}\label{eq:familar}
\end{equation}
where $\Delta_{0}$ is the natural measure of violation of marginal
selectivity, (\ref{eq:C_0 general}). If at least one of these inequalities
is violated, then the largest difference between the left-hand side
and $2\left(1+\Delta_{0}\right)$ is the degree of contextuality (after
scaling by $\nicefrac{1}{2}$).
\end{thm}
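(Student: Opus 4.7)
The plan is to read the statement as a cosmetic rewriting of Theorem~\ref{thm:C_min}. By Definition~\ref{def:contextuality general} the system is non-contextual iff $\Delta_{\min}=\Delta_{0}$, and by Theorem~\ref{thm:C_min} we have $\Delta_{\min}=\max\{\Delta_{0},\Delta_{\textnormal{CHSH}}\}$, so non-contextuality is equivalent to $\Delta_{\textnormal{CHSH}}\le\Delta_{0}$. My task therefore reduces to rewriting the single scalar inequality $\Delta_{\textnormal{CHSH}}\le\Delta_{0}$ as the conjunction of the four CHSH-type inequalities in \eqref{eq:familar}.

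For this I would unfold the four-way maximum inside \eqref{eq:C^0 general}: for each $(i,j)\in\{1,2\}^{2}$, the coefficient of $\langle A_{ij}B_{ij}\rangle$ in $\langle A_{11}B_{11}\rangle+\langle A_{12}B_{12}\rangle+\langle A_{21}B_{21}\rangle+\langle A_{22}B_{22}\rangle-2\langle A_{ij}B_{ij}\rangle$ becomes $1-2=-1$ while the other three coefficients stay $+1$. The four resulting linear combinations are precisely the four sign patterns appearing on the left-hand sides of \eqref{eq:familar} (one minus sign, placed in each of the four possible positions). Consequently
$$2\bigl(\Delta_{\textnormal{CHSH}}+1\bigr)=\max_{i,j}\left|\ldots\right|,$$
so $\Delta_{\textnormal{CHSH}}\le\Delta_{0}$ is literally $\max_{i,j}|\ldots|\le 2(1+\Delta_{0})$, i.e., the conjunction of the four displayed inequalities.

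For the final scaling assertion, when at least one inequality in \eqref{eq:familar} is violated we have $\Delta_{\textnormal{CHSH}}>\Delta_{0}$, hence $\Delta_{\min}=\Delta_{\textnormal{CHSH}}$ and
$$\Delta_{\min}-\Delta_{0}=\tfrac{1}{2}\bigl(\max_{i,j}\left|\ldots\right|-2(1+\Delta_{0})\bigr).$$
The quantity in parentheses is the largest of the four values $\textnormal{LHS}-2(1+\Delta_{0})$ over the four inequalities, and the prefactor $\tfrac{1}{2}$ realizes the $\nicefrac{1}{2}$ scaling claimed in the statement. There is no real obstacle here: all the substance is carried by Theorem~\ref{thm:C_min} (and behind it Lemma~\ref{lem:2}); what remains is an index-by-index bookkeeping check that the four cases of the $\max_{i,j}$ match the four sign patterns of \eqref{eq:familar}, together with the trivial algebraic rearrangement above.
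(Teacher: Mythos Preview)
Your proposal is correct and matches the paper's own treatment: the paper presents this theorem as an immediate reformulation of Theorem~\ref{thm:C_min} and Definition~\ref{def:contextuality general}, without a separate proof. Your unpacking of the $\max_{i,j}$ into the four sign patterns and the algebraic identity $\Delta_{\min}-\Delta_{0}=\tfrac{1}{2}\bigl(\max_{i,j}|\ldots|-2(1+\Delta_{0})\bigr)$ in the contextual case is exactly the intended bookkeeping.
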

The maximum value attainable by one of the linear combinations in
(\ref{eq:familar}) is 4. It follows that the system exhibits no contextuality
if the violation of marginal selectivity $\Delta_{0}$ in it is not
less than 1. Put differently, if $\Delta_{0}\geq1$, any observed
distributions of random variables can be accounted for in terms of
direct cross-influences, with no contextuality involved.

\section{\label{sec:Consequences-of-the}Consequences of the new definition
of contextuality}

The notion of contextuality was presented in Introduction to mean
that random variables recorded under mutually incompatible conditions
cannot be ``sewn together'' into a single system of jointly distributed
random variables, provided one assumes that all or some of them preserve
their identity across different conditions. We should now relax the
assumption clause: 
\begin{quote}
contextuality means that random variables recorded under mutually
incompatible conditions cannot be ``sewn together'' into a single
system of jointly distributed random variables, provided one assumes
that their identity across different conditions changes as little
as possibly allowed by direct cross-influences (equivalently, by observed
deviations from marginal selectivity). 
\end{quote}
As mentioned in Introduction, marginal selectivity is rarely satisfied
outside quantum physics, and, in particular, is almost always violated
in psychological experiments. Consider, e.g., a double-detection experiment,
where a participant is presented two side-by-side flashes of light
(left and right) and asked to say ``Yes/No'' to the question ``Is
there a flash on the left?'' and another ``Yes/No'' to the question
``Is there a flash on the right?''. Each flash can be presented
at two intensity levels: zero (no flash) and some very small value
$s>0$. We have therefore four conditions: $\left(0,0\right),\left(0,s\right),\left(s,0\right),$$\left(s,s\right)$.
Denoting the response about the left stimulus by $A$ and he response
about the right stimulus by $B$, we get the eight random variables
$A_{00},B_{00},\ldots,A_{ss},B_{ss}$. The situation is formally identical
to the Alice-Bob paradigm. The ``normative'' diagram (\ref{eq:diagram selective}),
with $\alpha,\beta$ being the two flash intensities, is very likely
to be violated on the level of marginal probabilities: the answer
about the left flash will almost certainly be influenced by the intensity
of the right flash, and vice versa. Our definition of contextuality,
however, allows one to determine whether contextuality is there on
top of these direct cross-influences.

Another example is taken from the work by \citet{AertsGaboraSozzo2013}.
They estimated the probabilities with which people chose one of two
presented to them animal names and one of two presented to them animal
sounds. The results were as follows:\medskip{}

\fbox{\begin{minipage}[t]{0.9\columnwidth}%
\begin{center}
  Probability estimates from Table 1 of \citep{AertsGaboraSozzo2013}.$^{\;\dagger}$
\end{center}
\medskip

\begin{center}
\begin{small}
\setlength{\tabcolsep}{2pt}
\begin{tabular}{c|cc|cc@{\hspace{6pt}}c|cc|c}
\cline{2-3} \cline{7-8} 
\multirow{2}{*}{$\phi=(\alpha_{1},\beta_{1})$} & $B_{11}=$ & $B_{11}=$ &  &  & \multirow{2}{*}{$\phi=(\alpha_{1},\beta_{2})$} & $B_{12}=$ & $B_{12}=$ & \tabularnewline
 & Growls & Whinnies &  &  &  & Snorts & Meows & \tabularnewline
\cline{1-4} \cline{6-9} 
\multicolumn{1}{|c|}{$A_{11}=\textnormal{Horse}$} & .049 & .630 & \multicolumn{1}{c|}{.679} & \multicolumn{1}{c|}{} & $A_{12}=\textnormal{Horse}$ & .593 & .025 & \multicolumn{1}{c|}{.618}\tabularnewline
\multicolumn{1}{|c|}{$A_{11}=\textnormal{Bear}$} & .259 & .062 & \multicolumn{1}{c|}{.321} & \multicolumn{1}{c|}{} & $A_{12}=\textnormal{Bear}$ & .296 & .086 & \multicolumn{1}{c|}{.382}\tabularnewline
\cline{1-4} \cline{6-9} 
 & .308 & .692 &  &  &  & .889 & .111 & \tabularnewline
\cline{2-3} \cline{7-8} 
\multicolumn{1}{c}{} &  & \multicolumn{1}{c}{} &  &  & \multicolumn{1}{c}{} &  & \multicolumn{1}{c}{} & \tabularnewline
\cline{2-3} \cline{7-8} 
\multirow{2}{*}{$\phi=(\alpha_{2},\beta_{1})$} & $B_{21}=$ & $B_{21}=$ &  &  & \multirow{2}{*}{$\phi=(\alpha_{2},\beta_{2})$} & $B_{22}=$ & $B_{22}=$ & \tabularnewline
 & Growls & Whinnies &  &  &  & Snorts & Meows & \tabularnewline
\cline{1-4} \cline{6-9} 
\multicolumn{1}{|c|}{$A_{21}=\textnormal{Tiger}$} & .778 & .086 & \multicolumn{1}{c|}{.864} & \multicolumn{1}{c|}{} & $A_{22}=\textnormal{Tiger}$ & .148 & .086 & \multicolumn{1}{c|}{.234}\tabularnewline
\multicolumn{1}{|c|}{$A_{21}=\textnormal{Cat}$} & .086 & .049 & \multicolumn{1}{c|}{.135} & \multicolumn{1}{c|}{} & $A_{22}=\textnormal{Cat}$ & .099 & .667 & \multicolumn{1}{c|}{.766}\tabularnewline
\cline{1-4} \cline{6-9} 
 & .864 & .135 &  &  &  & .247 & .753 & \tabularnewline
\cline{2-3} \cline{7-8} 
\end{tabular}
\setlength{\tabcolsep}{6pt}
\end{small}
\par
\end{center}
\medskip{}

$^{\dagger\;}$Based on 81 respondents per table.

\medskip{}
\end{minipage}}

\medskip{}

Here, $\alpha$ indicates one of the two animal dichotomies offered
($\alpha_{1}=\textnormal{Horse or Bear}$, $\alpha_{2}=\textnormal{Tiger or Cat}$),
and $\beta$ analogously indicates one of two animal sound dichotomies.
The value of $\Delta_{\textnormal{CHSH}}$ given by (\ref{eq:C^0 general})
equals 0.210 here, and Aerts et al.\ report it as evidence in favor
of contextuality (note that the CHSH bound of $2$ corresponds to
$\Delta_{\textnormal{CHSH}}=0$). We criticized this conclusion \citep{DzhafarovKujala2013Topics}
by pointing out that the derivation of the CHSH inequalities is not
valid without marginal selectivity, and the latter is clearly violated
in the data: e.g., $\Pr\left[B_{12}=\textnormal{Snorts}\right]=0.889$
while $\Pr\left[B_{22}=\textnormal{Snorts}\right]=0.247$.

We can now amend our criticism: the computation of $\Delta_{\textnormal{CHSH}}$
is meaningful even if marginal selectivity is contravened. One has,
however, to compare $\Delta_{\textnormal{CHSH}}$ to $\Delta_{0}$
of (\ref{eq:C_0 general}) rather than to zero, and to compute $\max\left\{ \Delta_{0},\Delta_{\textnormal{CHSH}}\right\} -\Delta_{0}$
as the measure of contextuality. Unfortunately for the Aerts et al.'s
conclusions, $\Delta_{0}$ in their data is too large (1.889) to allow
for nonzero contextuality.

In quantum physics, the no-signaling condition (a special case of
marginal selectivity) can be ensured by separating the outputs from
the ``wrong'' inputs by space-like intervals. There are, however,
some indications that in the well-known experiments by \citet{Weihs1998},
where space-like separation is claimed to be the case, some violations
of marginal selectivity were observed \citep{AdenierKhrennikov2007}.
If so, and whatever the physical cause of these violations, our new
approach provides a way of testing whether contextuality is still
present in the data.

Signaling is natural to assume in Leggett--Garg \citeyearpar{LeggettGarg1985}
-type systems, with three binary random variables $X,Y,Z$ tied to
three successive moments of time, $t_{1}<t_{2}<t_{3}$. Any two of
these three random variables can be measured together, in one experiment,
but not all three of them. If $X$ and $Z$ are measured together,
then (in accordance with our general approach, see \citealp{DK2014LNCSQualified,DK2014PLOSconditionalization,DK2014FOOP,DK2014Advances,DK2014Scripta})
the identity of $X$ as a random variable may be different from the
identity of $X$ when measured together with $Y$. This means that
$X$ in the two situations should be labelled differently, say, $X_{13}$
and $X_{12}$, respectively (based on the time moments involved).
Analogously, we have $Y_{12}$ and $Y_{23}$ depending on whether
$Y$ is measured together with $X$ or with $Z$; and we have $Z_{13}$
and $Z_{23}$.

\citet{SuppesZanotti1981} have shown that given uniform marginals,
an equivalent condition for the existence of a joint distribution
of 
\begin{equation}
X_{12},X_{13},Y_{12},Y_{23},Z_{13},Z_{23}\label{eq:LG-obs-vars}
\end{equation}
under the constraint $X_{12}=X_{13}$, $Y_{12}=Y_{23}$, $Z_{13}=Z_{23}$
is
\begin{equation}
  \begin{split}
    -1&\le\left\langle X_{12}Y_{12}\right\rangle +\left\langle Y_{23}Z_{23}\right\rangle +\left\langle X_{13}Z_{13}\right\rangle\\
    &\leq1+2\max\left\{ \left\langle X_{12}Y_{12}\right\rangle ,\left\langle Y_{23}Z_{23}\right\rangle ,\left\langle X_{13}Z_{13}\right\rangle \right\} .\label{eq:LG-suppes-zanotti}
  \end{split}
\end{equation}
As a side product of our analysis, we show that this inequality in
fact holds for arbitrary marginals as well and we generalize the inequalities
to the signaling case.
\begin{thm}
The minimum possible value $\Delta'_{\min}$ for 
\begin{equation}
\Delta'=\Pr\left[X_{12}\ne X_{13}\right]+\Pr\left[Y_{12}\ne Y_{23}\right]+\Pr\left[Z_{13}\ne Z_{23}\right]\label{eq:LG-Delta}
\end{equation}
that is compatible with the observed expectations 
\begin{equation}
\left\langle X_{12}Y_{12}\right\rangle ,\,\left\langle X_{13}Z_{13}\right\rangle ,\,\left\langle Y_{23}Z_{23}\right\rangle ,\,\left\langle X_{12}\right\rangle ,\,\left\langle X_{13}\right\rangle ,\,\left\langle Y_{12}\right\rangle ,\,\left\langle Y_{23}\right\rangle ,\,\left\langle Z_{13}\right\rangle ,\,\left\langle Z_{23}\right\rangle \label{eq:LG-observable-expectations}
\end{equation}
 is
\begin{equation}
\Delta'_{\min}=\max\left\{ \Delta'_{0},\Delta'_{\textnormal{SZ}}\right\} ,
\end{equation}
where
\begin{equation}
\Delta'_{0}=\frac{1}{2}\left(\left|\left\langle X_{12}\right\rangle -\left\langle X_{13}\right\rangle \right|+\left|\left\langle Y_{12}\right\rangle -\left\langle Y_{23}\right\rangle \right|+\left|\left\langle Z_{13}\right\rangle -\left\langle Z_{23}\right\rangle \right|\right)
\end{equation}
is the natural measure of the violation of marginal selectivity and
\begin{equation}
\begin{array}{r@{}l}
\Delta'_{\textnormal{SZ}}=-\frac{1}{2}+\frac{1}{2}\max\big\{\, & \left\langle X_{12}Y_{12}\right\rangle +\left\langle X_{13}Z_{13}\right\rangle -\left\langle Y_{23}Z_{23}\right\rangle ,\\
 & \left\langle X_{12}Y_{12}\right\rangle -\left\langle X_{13}Z_{13}\right\rangle +\left\langle Y_{23}Z_{23}\right\rangle ,\\
 -& \left\langle X_{12}Y_{12}\right\rangle +\left\langle X_{13}Z_{13}\right\rangle +\left\langle Y_{23}Z_{23}\right\rangle ,\\
 -& \left\langle X_{12}Y_{12}\right\rangle -\left\langle X_{13}Z_{13}\right\rangle -\left\langle Y_{23}Z_{23}\right\rangle \big\}
\end{array}
\end{equation}
is ($\nicefrac{1}{2}$ times) the maximum violation of the Suppes--Zanotti
inequalities (\ref{eq:LG-suppes-zanotti}).\end{thm}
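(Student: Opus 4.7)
The plan is to mirror the proof of Theorem \ref{thm:C_min}: show that any coupling of the six random variables in (\ref{eq:LG-obs-vars}) consistent with the observed expectations (\ref{eq:LG-observable-expectations}) must satisfy both $\Delta' \ge \Delta'_0$ and $\Delta' \ge \Delta'_{\textnormal{SZ}}$, and then argue that $\Delta' = \max\{\Delta'_0,\Delta'_{\textnormal{SZ}}\}$ is actually achievable.

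First, $\Delta' \ge \Delta'_0$ follows by replaying the argument of Theorem \ref{thm:C_0} on each of the three connection pairs $(X_{12},X_{13})$, $(Y_{12},Y_{23})$, $(Z_{13},Z_{23})$ separately. For each pair the Fr\'echet--Hoeffding bound on a $\{-1,+1\}^2$-valued distribution with prescribed marginals yields $\Pr[X_{12}\ne X_{13}]\ge \frac{1}{2}|\langle X_{12}\rangle - \langle X_{13}\rangle|$, and summing the three such bounds gives $\Delta' \ge \Delta'_0$.

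The second lower bound $\Delta' \ge \Delta'_{\textnormal{SZ}}$ I would obtain by a collapsing argument. Given any coupling, the triple $(X_{12},Y_{12},Z_{13})$ has a well-defined joint distribution, so the classical Suppes--Zanotti inequalities read in their four facet form $\epsilon_1\langle X_{12}Y_{12}\rangle + \epsilon_2\langle X_{12}Z_{13}\rangle + \epsilon_3\langle Y_{12}Z_{13}\rangle \le 1$ for each odd-parity sign vector $(\epsilon_1,\epsilon_2,\epsilon_3)$. This facet form is valid for arbitrary (not necessarily uniform) marginals because the polytope of pairwise expectations of a binary triple is the corresponding tetrahedron --- this is exactly the side statement of the theorem. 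I would then substitute $\langle X_{13}Z_{13}\rangle$ for $\langle X_{12}Z_{13}\rangle$ and $\langle Y_{23}Z_{23}\rangle$ for $\langle Y_{12}Z_{13}\rangle$ in each facet inequality, expanding the differences as $\langle(X_{13}-X_{12})Z_{13}\rangle$ and $\langle(Y_{23}-Y_{12})Z_{23}\rangle + \langle Y_{12}(Z_{23}-Z_{13})\rangle$ and bounding each $|\langle\cdots\rangle|$ by twice the corresponding disagreement probability. The total signed error is at most $2\Delta'$, and by choosing the sign vector adversarially one obtains $\max(\text{four combos}) \le 1 + 2\Delta'$, i.e., $\Delta' \ge \Delta'_{\textnormal{SZ}}$.

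The main obstacle is the achievability direction: constructing, for every admissible collection (\ref{eq:LG-observable-expectations}), a coupling attaining $\Delta' = \max\{\Delta'_0,\Delta'_{\textnormal{SZ}}\}$. This is the three-variable analogue of the computer-assisted characterization lemma invoked in the proof of Theorem \ref{thm:C_min}. I would formulate the minimization as a linear program on the $64$ joint probabilities of the six binary variables, with the three prescribed pairwise marginal distributions as equality constraints, and appeal to LP duality to certify that the two derived inequalities are the only binding ones. If the duality is unwieldy in full generality, an explicit construction split by which bound is active should suffice: when $\Delta'_0 \ge \Delta'_{\textnormal{SZ}}$ the three pair-optimal Fr\'echet couplings can be glued together through $Z_{13}$ without triggering any Suppes--Zanotti violation, whereas when $\Delta'_{\textnormal{SZ}} > \Delta'_0$ a deformation of a tight Suppes--Zanotti extremal coupling (in the spirit of the tight CHSH coupling behind Theorem \ref{thm:C_min}) should realize the bound. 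Once both directions are in place and the two bounds shown to be mutually compatible, the theorem follows.
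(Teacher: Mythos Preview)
Your two lower-bound arguments are correct and are in fact more transparent than what the paper does. The paper obtains both inequalities $\Delta'\ge\Delta'_0$ and $\Delta'\ge\Delta'_{\textnormal{SZ}}$ as a by-product of the computer-assisted Lemma~\ref{lem:2-1}, which characterises \emph{all} values of $\Delta'$ compatible with the observed expectations via Fourier--Motzkin elimination on the $64$-point probability polytope. Your direct route---Fr\'echet--Hoeffding for $\Delta'_0$, and the telescoping bound $|\langle Y_{23}Z_{23}\rangle-\langle Y_{12}Z_{13}\rangle|\le 2\Pr[Y_{12}\ne Y_{23}]+2\Pr[Z_{13}\ne Z_{23}]$ applied to a valid odd-parity facet of the jointly distributed triple $(X_{12},Y_{12},Z_{13})$---gives the same two lower bounds without any computer assistance, and your observation that the odd-parity facets $s_1(a,b,c)\le 1$ hold for arbitrary $\pm1$ triples (not just uniform ones) is exactly the ``side product'' the paper advertises but never proves by hand.

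The gap is entirely in the achievability direction, and here your sketch is too optimistic. The six variables form a single $6$-cycle (see Fig.~\ref{fig:A-representation-of}), so there is no cut vertex: your phrase ``glued together through $Z_{13}$'' does not correspond to any actual decomposition, and taking the three Fr\'echet-optimal connection couplings independently need not be consistent with the three observed pair distributions simultaneously. Likewise, ``a deformation of a tight Suppes--Zanotti extremal coupling'' is not a construction until you say which deformation and verify it respects all three observed $2$-marginals. The paper sidesteps this by invoking the \emph{converse} half of Lemma~\ref{lem:2-1}: the Fourier--Motzkin elimination certifies that whenever the four displayed inequalities on $\Delta'$ hold, compatible connection expectations exist, and mutual compatibility of the four bounds then follows because the observed system always admits \emph{some} coupling. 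If you want to avoid the computer-assisted lemma, the clean alternative is your LP-duality suggestion: write the feasibility LP in the connection expectations $\langle X_{12}X_{13}\rangle,\langle Y_{12}Y_{23}\rangle,\langle Z_{13}Z_{23}\rangle$ subject to the $6$-cycle compatibility constraint of Lemma~\ref{lem:1-1} and the trivial bounds~(\ref{eq:implicit}), and show by hand that projecting out these three variables yields exactly the four inequalities on $\Delta'$. That is doable analytically for a $6$-cycle (and is what \citep{KujalaDzhafarovLarsson2015} later does), but it is real work, not a remark.
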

\begin{proof}
By Lemma~\ref{lem:2-1} of the next section, $\Delta'$ is compatible
with the observed expectations (\ref{eq:LG-observable-expectations})
if and only if it satisfies
\begin{align}
\Delta' &\textstyle \ge-\frac{1}{2}+\frac{1}{2}s_{1}\left(\left\langle X_{12}Y_{12}\right\rangle ,\left\langle Y_{23}Z_{23}\right\rangle ,\left\langle X_{13}Z_{13}\right\rangle \right),\label{eq:S_lower_from_rxyz}\\
\Delta' &\textstyle \ge\frac{1}{2}\left(\left|\left\langle X_{12}\right\rangle -\left\langle X_{13}\right\rangle \right|+\left|\left\langle Y_{12}\right\rangle -\left\langle Y_{23}\right\rangle \right|+\left|\left\langle Z_{13}\right\rangle -\left\langle Z_{23}\right\rangle \right|\right),\label{eq:S_lower_from_xyz}\\
\Delta' &\textstyle \le3-\left[-\frac{1}{2}-\frac{1}{2}s_{1}\left(\left\langle X_{12}Y_{12}\right\rangle ,\left\langle Y_{23}Z_{23}\right\rangle ,\left\langle X_{13}Z_{13}\right\rangle \right)\right],\label{eq:S_upper_from_rxyz}\\
\Delta' &\textstyle \le3-\frac{1}{2}\left(\left|\left\langle X_{12}\right\rangle +\left\langle X_{13}\right\rangle \right|+\left|\left\langle Y_{12}\right\rangle +\left\langle Y_{23}\right\rangle \right|+\left|\left\langle Z_{13}\right\rangle +\left\langle Z_{23}\right\rangle \right|\right).\label{eq:S_upper_from_xyz}
\end{align}
These inequalities are always mutually compatible, whence $\Delta'_{\min}$
is the larger of the two right-hand expressions in (\ref{eq:S_lower_from_rxyz})
and (\ref{eq:S_lower_from_xyz}). \end{proof}
\begin{defn}
The degree of contextuality in a system with given observed expectations
(\ref{eq:LG-observable-expectations}) is $\Delta'_{\min}-\Delta'_{0}$,
where $\Delta'_{\min}$ is the minimal value of $\Delta'$ in (\ref{eq:LG-Delta})
for which a joint distribution for (\ref{eq:LG-obs-vars}) exists. 
\end{defn}

Using essentially the same reasoning as for the EPR/Bohm paradigm,
we come to the following 
\begin{thm}
A Leggett--Garg-type systems exhibits no contextuality if and only
if
\begin{equation}
\begin{split}
\left\langle X_{12}Y_{12}\right\rangle +\left\langle Y_{23}Z_{23}\right\rangle -\left\langle X_{13}Z_{13}\right\rangle &\leq1+2\Delta'_{0},\\
\left\langle X_{12}Y_{12}\right\rangle -\left\langle Y_{23}Z_{23}\right\rangle +\left\langle X_{13}Z_{13}\right\rangle &\leq1+2\Delta'_{0},\\
-\left\langle X_{12}Y_{12}\right\rangle +\left\langle Y_{23}Z_{23}\right\rangle +\left\langle X_{13}Z_{13}\right\rangle &\leq1+2\Delta'_{0},\\
-\left\langle X_{12}Y_{12}\right\rangle -\left\langle Y_{23}Z_{23}\right\rangle -\left\langle X_{13}Z_{13}\right\rangle &\leq1+2\Delta'_{0}.
\end{split}\label{eq:Leggett-Garg-noncontextuality}
\end{equation}
The largest in absolute value breach of one of these boundaries then
can be taken as a measure of contextuality.
\end{thm}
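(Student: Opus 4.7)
The plan is to reduce the statement to the characterization of $\Delta'_{\min}$ already proved in the preceding theorem, and then to repackage the condition $\Delta'_{\min}-\Delta'_{0}=0$ in the four-inequality form. By the definition preceding this theorem, the system is noncontextual exactly when $\Delta'_{\min}=\Delta'_{0}$, and by the preceding theorem $\Delta'_{\min}=\max\{\Delta'_{0},\Delta'_{\textnormal{SZ}}\}$. Hence noncontextuality is equivalent to the single scalar inequality $\Delta'_{\textnormal{SZ}}\leq\Delta'_{0}$.

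Next I would expand $\Delta'_{\textnormal{SZ}}=-\tfrac{1}{2}+\tfrac{1}{2}\max\{L_{1},L_{2},L_{3},L_{4}\}$, where $L_{1},\ldots,L_{4}$ are the four signed sums of $\left\langle X_{12}Y_{12}\right\rangle ,\left\langle Y_{23}Z_{23}\right\rangle ,\left\langle X_{13}Z_{13}\right\rangle $ appearing in the definition of $\Delta'_{\textnormal{SZ}}$. The inequality $\Delta'_{\textnormal{SZ}}\leq\Delta'_{0}$ then rearranges to $\max_{k}L_{k}\leq 1+2\Delta'_{0}$, i.e.\ $L_{k}\leq 1+2\Delta'_{0}$ for each $k=1,\ldots,4$. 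Comparing with the statement, these four inequalities are exactly (\ref{eq:Leggett-Garg-noncontextuality}), so the biconditional part of the theorem follows directly.

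For the final sentence about measuring the degree of contextuality, I would observe that by Definition of $\Delta'_{\min}-\Delta'_{0}$ and the preceding theorem,
\begin{equation}
\Delta'_{\min}-\Delta'_{0}=\max\{0,\Delta'_{\textnormal{SZ}}-\Delta'_{0}\}.
\end{equation}
When this is positive, $2(\Delta'_{\min}-\Delta'_{0})=\max_{k}L_{k}-(1+2\Delta'_{0})$, which is exactly the largest breach among the four inequalities in (\ref{eq:Leggett-Garg-noncontextuality}). So the largest signed excess of a left-hand side over $1+2\Delta'_{0}$, scaled by $\tfrac{1}{2}$, equals the degree of contextuality defined earlier, as claimed.

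The whole argument is essentially a bookkeeping reduction to the previous theorem; the only mild subtlety is making sure that the specific four linear combinations appearing in the definition of $\Delta'_{\textnormal{SZ}}$ line up sign-for-sign with the four left-hand sides in (\ref{eq:Leggett-Garg-noncontextuality}), and that $\max_{k}L_{k}\leq 1+2\Delta'_{0}$ is the correct one-sided form (the opposite-sign inequalities are automatically obtained by the presence of all four sign patterns in the $\max$). So there is really no serious obstacle here; the main work has already been done in establishing $\Delta'_{\min}=\max\{\Delta'_{0},\Delta'_{\textnormal{SZ}}\}$ in the preceding theorem.
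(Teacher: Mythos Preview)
Your proposal is correct and is exactly the argument the paper has in mind: the paper gives no explicit proof for this theorem, merely prefacing it with ``Using essentially the same reasoning as for the EPR/Bohm paradigm,'' and your reduction to $\Delta'_{\textnormal{SZ}}\le\Delta'_{0}$ via $\Delta'_{\min}=\max\{\Delta'_{0},\Delta'_{\textnormal{SZ}}\}$ is precisely that reasoning spelled out. The only thing to note is that the four $L_{k}$ in the definition of $\Delta'_{\textnormal{SZ}}$ list the three correlations in the order $\left\langle X_{12}Y_{12}\right\rangle,\left\langle X_{13}Z_{13}\right\rangle,\left\langle Y_{23}Z_{23}\right\rangle$, while (\ref{eq:Leggett-Garg-noncontextuality}) lists them as $\left\langle X_{12}Y_{12}\right\rangle,\left\langle Y_{23}Z_{23}\right\rangle,\left\langle X_{13}Z_{13}\right\rangle$, so the four lines match up as a permutation rather than line-by-line; you flagged this as the mild subtlety to check, and it indeed works out.
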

Inequalities (\ref{eq:Leggett-Garg-noncontextuality}) can also be
equivalently rewritten closer to the Suppes--Zanotti \citeyearpar{SuppesZanotti1981}
formulation:
\begin{equation}
  \begin{split}
    -1-2\Delta'_{0}&\le\left\langle X_{12}Y_{12}\right\rangle +\left\langle Y_{23}Z_{23}\right\rangle +\left\langle X_{13}Z_{13}\right\rangle\\
    &\leq1+2\Delta'_{0}+2\max\left\{ \left\langle X_{12}Y_{12}\right\rangle ,\left\langle Y_{23}Z_{23}\right\rangle ,\left\langle X_{13}Z_{13}\right\rangle \right\} .\label{eq:Leggett-Garg-non-contextuality-Suppes-Zanotti}
  \end{split}
\end{equation}

\section{Technical details}\label{sec:techdetails}

In this section, we give the technical details of the computer-assisted
results used above. Refer to Fig.~\ref{fig:A-representation-of}
for a graphical representation of the connections and observed pairs
of random variables in the system.\footnote{Based on our most recent theoretical results \citep{KujalaDzhafarovLarsson2015,KujalaDzhafarov2015},
the computer-assisted proofs for the systems considered here can in
fact be obtained analytically as well. However, the principles of
computer-assisted proof laid out here are applicable in systems that
are not covered by the analytical results.}
\begin{lem}
\label{lem:1}The necessary and sufficient condition for the connection
expectations $\left(\left\langle A_{i1}A_{i2}\right\rangle ,\left\langle B_{1j}B_{2j}\right\rangle \right)_{i,j\in\left\{ 1,2\right\} }$
to be compatible with the observed expectations 
$$\left(\left\langle A_{ij}B_{ij}\right\rangle ,\left\langle A_{ij}\right\rangle ,\left\langle B_{ij}\right\rangle \right)_{i,j\in\left\{ 1,2\right\} }$$
is 
\begin{equation}
\begin{split}
  s_{0}&\left(\left\langle A_{11}B_{11}\right\rangle ,\left\langle A_{12}B_{12}\right\rangle ,\left\langle A_{21}B_{21}\right\rangle ,\left\langle A_{22}B_{22}\right\rangle \right)\\
  &\le6-s_{1}\left(\left\langle A_{11}A_{12}\right\rangle ,\left\langle B_{11}B_{21}\right\rangle ,\left\langle A_{21}A_{22}\right\rangle ,\left\langle B_{12}B_{22}\right\rangle \right),\\
  s_{1}&\left(\left\langle A_{11}B_{11}\right\rangle ,\left\langle A_{12}B_{12}\right\rangle ,\left\langle A_{21}B_{21}\right\rangle ,\left\langle A_{22}B_{22}\right\rangle \right)\\
  &\le6-s_{0}\left(\left\langle A_{11}A_{12}\right\rangle ,\left\langle B_{11}B_{21}\right\rangle ,\left\langle A_{21}A_{22}\right\rangle ,\left\langle B_{12}B_{22}\right\rangle \right),
\end{split}\label{eq:compatibility}
\end{equation}
where
\begin{equation}
\begin{split}
  s_{0}\left(a,b,c,d\right) & =\max\left\{ \left(\pm a\pm b\pm c\pm d\right):\textnormal{ the number of minuses is even}\right\},\\
  s_{1}\left(a,b,c,d\right) & =\max\left\{ \left(\pm a\pm b\pm c\pm d\right):\textnormal{ the number of minuses is odd}\right\}.\label{eq:s1}
\end{split}
\end{equation}
\end{lem}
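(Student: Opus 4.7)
The plan is to recast the compatibility question as a membership question in the correlation polytope of a cycle graph. Order the eight random variables around an $8$-cycle as $A_{11},B_{11},B_{21},A_{21},A_{22},B_{22},B_{12},A_{12}$; the eight successive adjacencies are exactly the four observed pairs $(A_{ij},B_{ij})$ alternating with the four connections $(A_{i1},A_{i2})$ and $(B_{1j},B_{2j})$. A joint distribution on $\{-1,+1\}^{8}$ reproducing all prescribed expectations therefore exists if and only if the vector of singleton and edge expectations lies in the full correlation polytope associated with $C_{8}$.

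Under the implicit assumption that each individual pair distribution and each individual connection distribution is internally valid (so that $|\langle U\rangle+\langle V\rangle|-1\le\langle UV\rangle\le 1-|\langle U\rangle-\langle V\rangle|$ holds on every edge), the only remaining constraints come from cycle inequalities for $C_{8}$. By the classical Barahona--Mahjoub theorem on cut polytopes of graphs with no $K_{5}$-minor, these have the form $\sum_{k=1}^{8}\epsilon_{k}r_{k}\le 6$ for each sign pattern $\epsilon\in\{\pm 1\}^{8}$ with an odd number of $-1$'s, where $r_{k}$ denotes the correlation on the $k$-th cycle edge.

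Next I would exploit that pair edges and connection edges alternate around $C_{8}$: each cycle inequality uses four signs from the pair group and four from the connection group, and its total number of $-1$'s is odd iff the parities in the two groups disagree. Maximizing separately over each group then collapses the whole family of cycle inequalities to the two assertions
\[
s_{0}(\text{pairs})+s_{1}(\text{conns})\le 6,\qquad s_{1}(\text{pairs})+s_{0}(\text{conns})\le 6,
\]
which is precisely \eqref{eq:compatibility}.

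The main obstacle is justifying the reduction to cut-polytope facets in a setting where singleton expectations are also prescribed, so the natural ambient object is the full correlation polytope rather than the pure cut polytope; one must check that the singleton data contribute no further facets beyond the per-edge validity conditions. This is the computer-assisted step: one enumerates the $2^{8}$ deterministic vertices of the polytope of $\pm 1$-valued joint distributions, projects onto the coordinates of interest by Fourier--Motzkin elimination (or by running a tool such as \texttt{PORTA} or \texttt{cdd}), and verifies that after this projection the only non-trivial facets are exactly the two inequalities above, establishing both necessity and sufficiency.
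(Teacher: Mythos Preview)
Your proposal is correct, and its final rigorous step---enumerating the $2^{8}$ deterministic vertices and computing the facet description of their projection---is exactly the paper's proof; the paper reports $160$ inequalities plus $16$ equations, which after reparametrization by expectations become the $128$ nontrivial inequalities packed into \eqref{eq:compatibility} together with the $32$ per-edge validity constraints of the form \eqref{eq:implicit}. The genuine addition in your write-up is the structural layer the paper omits: you observe that the eight variables sit on an $8$-cycle whose edges alternate between observed pairs and connections, so that the Barahona--Mahjoub description of cut polytopes for $K_{5}$-minor-free graphs predicts exactly the odd-sign cycle inequalities, which your parity argument then collapses to the two $s_{0}/s_{1}$ lines---this explains \emph{why} the answer has its form, whereas the paper only certifies it by computation. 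Your ``main obstacle'' can in fact be dissolved without any computer assistance: adjoin a deterministic ground node $X_{0}\equiv+1$ so that each singleton $\langle X_{i}\rangle$ becomes the edge correlation $\langle X_{0}X_{i}\rangle$; the relevant graph is then the wheel $W_{8}$, which is planar, and its chordless cycles are precisely the eight hub triangles (giving the $32$ per-edge constraints) and the outer $C_{8}$ (giving the $128$ inequalities of \eqref{eq:compatibility}), so Barahona--Mahjoub applied to $W_{8}$ already yields a fully analytic proof and recovers the paper's facet count $160=128+32$.
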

\begin{proof}
The joint distribution of the eight random variables 
$$A_{11},B_{11},A_{12},B_{12},A_{21},B_{21},A_{22},B_{22}$$
is fully described by the vector $\mathbf{q}\in[0,1]^{n},$ $q_{1}+\dots+q_{n}=1$,
consisting of the probabilities of the $n=2^{8}=256$ different combinations
of the values of the eight random variables. We then define a vector
$\mathbf{p}\in[0,1]^{m}$, $m=32$, consisting of the $16$ observable
probabilities $\Pr[A_{ij}=a,\ B_{ij}=b]$ for $a,b\in\{-1,1\},$ $i,j\in\{1,2\}$
and the $16$ connection probabilities given by $\Pr[A_{i1}=a,\ A_{i2}=a']$
and $\Pr[B_{1j}=b,\ B_{2j}=b']$ for $a,a',b,b'\in\{-1,1\}$ and $i,j\in\{1,2\}$.
As every element of $\mathbf{p}$ is a ($2$-)marginal probability
of the joint represented by $\mathbf{q}$, there exists a binary marix
$M\in\{0,1\}^{m\times n}$ such that
\begin{equation}
\mathbf{p}=M\mathbf{q}.\label{eq:pMq}
\end{equation}
It follows that the observable probabilities $p_{1},\dots,p_{16}$
are compatible with the connection probabilities $p_{17},\dots,p_{32}$
if and only if there exists an $n$-vector $\mathbf{q}\ge0$ such
that \eqref{eq:pMq} holds. As described in \citep[Text~S3]{DK2013PLoS},
the set of vectors $\mathbf{p}$ satisfying this constraint
forms a polytope whose vertices are given by the columns of $M$ and
whose half space representation can be obtained by a facet enumeration
algorithm. As also described in \citep{DK2013PLoS},
this halfspace representation consists of $160$ inequalities and
$16$ equations in $p_{1},\dots,p_{32}$. The $16$ equations correspond
to the requirement that the $1$-marginals of the observable probabilities
agree with those of the connections and that the observable probabilities
are properly normalized.

Expressing the probabilities in the vector $\mathbf{p}$ in terms
of the observable and connection expectations $\left(\left\langle A_{ij}B_{ij}\right\rangle ,\left\langle A_{ij}\right\rangle ,\left\langle B_{ij}\right\rangle ,\left\langle A_{i1}A_{i2}\right\rangle ,\left\langle B_{1j}B_{2j}\right\rangle \right)$,
$i,j\in\{1,2\}$, the $16$ equations become identically true (the
parameterization already guarantees them), and of the $160$ inequalities,
$128$ turn into exactly those represented by \eqref{eq:compatibility}
and the remaining $32$ are trivial constraints of the form 
\begin{equation}
-1+|\left\langle A\right\rangle +\left\langle B\right\rangle |\le\left\langle AB\right\rangle \le1-|\left\langle A\right\rangle -\left\langle B\right\rangle |\label{eq:implicit}
\end{equation}
 for the $8$ pairs of random variables involved in \eqref{eq:compatibility}.
The trivial constraints correspond to the implicit requirement that
the observable and connection probabilities are nonnegative and thus
they need not be explicitly shown in the statement of the theorem.
\end{proof}
This proof is different from the similar result in \citep{DK2013PLoS}
in that the parameterization for the probabilities in $\mathbf{p}$
is more general (allowing for arbitrary marginals of the eight random
variables) and so we obtain a more general condition for the compatibility
of observable and connection probabilities than before. It should
be noted that although the expectations $\left\langle A_{ij}\right\rangle ,\left\langle B_{ij}\right\rangle $,
$i,j\in\{1,2\}$ do not explicitly appear in \eqref{eq:compatibility},
they are still present in the $32$ implicit constraints.
\begin{lem}
\label{lem:2}If the connection expectations $\left(\left\langle A_{i1}A_{i2}\right\rangle ,\left\langle B_{1j}B_{2j}\right\rangle \right)_{i,j\in\left\{ 1,2\right\} }$
are compatible with the observed expectations $\left(\left\langle A_{ij}B_{ij}\right\rangle ,\left\langle A_{ij}\right\rangle ,\left\langle B_{ij}\right\rangle \right)_{i,j\in\left\{ 1,2\right\} }$,
then, with $\Delta$ defined as in \eqref{eq:C under MS}, 
\begin{equation}
\begin{split}
\Delta&\textstyle\ge-1+\frac{1}{2}s_{1}\left(\left\langle A_{11}B_{11}\right\rangle ,\left\langle A_{12}B_{12}\right\rangle ,\left\langle A_{21}B_{21}\right\rangle ,\left\langle A_{22}B_{22}\right\rangle \right),\\
\Delta&\textstyle\ge\frac{1}{2}(\left|\left\langle A_{11}\right\rangle -\left\langle A_{12}\right\rangle \right|+\left|\left\langle A_{21}\right\rangle -\left\langle A_{22}\right\rangle \right|+\\
&\phantom{{}\ge2(}\left|\left\langle B_{11}\right\rangle -\left\langle B_{21}\right\rangle \right|+\left|\left\langle B_{12}\right\rangle -\left\langle B_{22}\right\rangle \right|),\\
\Delta&\textstyle\le4-\left[-1+\frac{1}{2}s_{1}\left(\left\langle A_{11}B_{11}\right\rangle ,\left\langle A_{12}B_{12}\right\rangle ,\left\langle A_{21}B_{21}\right\rangle ,\left\langle A_{22}B_{22}\right\rangle \right)\right],\\
\Delta&\textstyle\le4-\frac{1}{2}(\left|\left\langle A_{11}\right\rangle +\left\langle A_{12}\right\rangle \right|+\left|\left\langle A_{21}\right\rangle +\left\langle A_{22}\right\rangle \right|+\\
&\phantom{{}\le4-2(}\left|\left\langle B_{11}\right\rangle +\left\langle B_{21}\right\rangle \right|+\left|\left\langle B_{12}\right\rangle +\left\langle B_{22}\right\rangle \right|).
\end{split}\label{eq:Delta_system}
\end{equation}
Conversely, if these inequalities are satisfied for a given value
of $\Delta$, then the connection expectations $\left(\left\langle A_{i1}A_{i2}\right\rangle ,\left\langle B_{1j}B_{2j}\right\rangle \right)_{i,j\in\left\{ 1,2\right\} }$
can always be chosen so that they are compatible with the observable
expectations $\left(\left\langle A_{ij}B_{ij}\right\rangle ,\left\langle A_{ij}\right\rangle ,\left\langle B_{ij}\right\rangle \right)_{i,j\in\left\{ 1,2\right\} }$
and yield the given value of $\Delta$ in \eqref{eq:C under MS}.\end{lem}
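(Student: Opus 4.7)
Setup. By $\Pr[A_{i1}\ne A_{i2}]=(1-\langle A_{i1}A_{i2}\rangle)/2$ and the analogous identity for the $B$-connections, $\Delta$ is the affine function $\Delta=2-S/2$ of the sum $S:=\langle A_{11}A_{12}\rangle+\langle A_{21}A_{22}\rangle+\langle B_{11}B_{21}\rangle+\langle B_{12}B_{22}\rangle$. Together with the implicit one-sided bounds $-1+|\langle A_{i1}\rangle+\langle A_{i2}\rangle|\le\langle A_{i1}A_{i2}\rangle\le1-|\langle A_{i1}\rangle-\langle A_{i2}\rangle|$ (and analogously for the $B$-connections), the two $s_0,s_1$-inequalities of Lemma~\ref{lem:1} carve out a convex polytope of feasible connection expectations in $\mathbb R^4$. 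The plan is to project this polytope along the linear functional $\Delta$; the four inequalities in the statement will then be identified with the four facets of the resulting interval.

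Forward direction. The two $s_1(r)$-inequalities follow from Lemma~\ref{lem:1} in one line each. The quantity $S$ is the all-plus (zero-minus, hence even) expression in the definition of $s_0$, so $s_0$ applied to the four connection expectations is at least $S$; combined with Lemma~\ref{lem:1}'s inequality $s_0(\text{connections})\le 6-s_1(r_{11},r_{12},r_{21},r_{22})$ this gives $S\le 6-s_1(r)$, equivalent to $\Delta\ge -1+\tfrac12 s_1(r)$. Using instead the all-minus (four-minus, still even) expression yields $s_0(\text{connections})\ge -S$ and hence $\Delta\le 5-\tfrac12 s_1(r)$. The two marginal inequalities come from summing the implicit upper (respectively lower) bounds across the four connection expectations.

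Converse direction. Given $\Delta$ satisfying all four inequalities, we must exhibit connection expectations $c_A^{(i)}=\langle A_{i1}A_{i2}\rangle$ and $c_B^{(j)}=\langle B_{1j}B_{2j}\rangle$ that lie in their implicit intervals, sum to $S=4-2\Delta$, and satisfy both of Lemma~\ref{lem:1}'s $s_0,s_1$-inequalities. The cleanest route, paralleling Lemma~\ref{lem:1} itself, is a Fourier--Motzkin elimination (equivalently, a facet enumeration) of the three coordinates of $\mathbb R^4$ transverse to $S$ from the finite linear system defining the polytope: $16$ inequalities from the two $s_0,s_1$-constraints of Lemma~\ref{lem:1} (each a maximum over $8$ signed sums) plus $8$ implicit one-sided bounds. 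The output is an interval in $S$, and a direct matching of its facets against the stated inequalities closes the argument.

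Main obstacle. The delicate point is verifying that the projection really has only four facets, and that they are precisely the ones stated, rather than additional constraints arising from interactions between the $s_0,s_1$-inequalities and the implicit bounds when the marginals are highly asymmetric: a naive attempt to build a feasible quadruple by making the coordinates ``as equal as possible'' on the slice $\{\sum c = S\}\cap[L,U]^4$ need not work when the implicit box is skewed. This is exactly the kind of polytope projection for which the computer-assisted pipeline used in the proof of Lemma~\ref{lem:1} was developed, so the same machinery applies here one level further along the projection chain; once the elimination is carried out mechanically and its output read off, both the forward and converse directions are confirmed simultaneously.
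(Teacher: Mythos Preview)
Your proposal is correct and follows essentially the same route as the paper: both reduce the problem to Fourier--Motzkin elimination of the four connection-expectation variables from the linear system supplied by Lemma~\ref{lem:1} (together with the implicit box constraints), yielding an interval in $\Delta$ whose endpoints are the stated bounds. Your forward direction is somewhat more explicit than the paper's purely computational account---you identify which choices of signs in $s_0$ and which implicit bounds produce each of the four inequalities---but the converse, which is the substantive part, is handled identically by deferring to the computer-assisted elimination and redundancy removal.
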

\begin{proof}
Given the 160 inequalities (including the 32 implicit inequalities)
of Lemma~\ref{lem:1} characterizing the compatibility of the connection
expectations with the observable expectations, we amend this linear
system with the equation \eqref{eq:C under MS} defining $\Delta$
written in terms of the expectations 
$$\left(\left\langle A_{i1}A_{i2}\right\rangle ,\left\langle B_{1j}B_{2j}\right\rangle ,\left\langle A_{ij}\right\rangle ,\left\langle B_{ij}\right\rangle \right)_{i,j\in\{1,2\}}.$$
Then, we use this equation to eliminate one of the connection expectation
variables $\left(\left\langle A_{i1}A_{i2}\right\rangle ,\left\langle B_{1j}B_{2j}\right\rangle \right)_{i,j\in\{1,2\}}$
from the system (by solving the variable from the equation and then
substituting the solution everywhere else). After that, we eliminate
the three remaining connection expectation variables one by one using
the Fourier--Motzkin elimination algorithm (see Theorem~\ref{thm:Fourier-Motzkin}
below). After the elimination of each variable, we remove any redundant
inequalities from the system by linear programming using the algorithm
described in \citep[Text S3]{DK2013PLoS}. After having eliminated
all connection expectation variables, we are left with the system
\eqref{eq:Delta_system} (and implicit constraints of the form \eqref{eq:implicit}
for the pairs $(A_{ij},B_{ij})$, $i,j\in\{1,2\}$). The Fourier--Motzkin
elimination algorithm guarantees that the resulting system has a solution
precisely when the original system has a solution with \emph{some}
values of the eliminated variables.\end{proof}
\begin{thm}
[Fourier--Motzkin elimination]\label{thm:Fourier-Motzkin}\index{Fourier--Motzkin elimination}Given a
system of linear inequalities in the variables $x$ and $\mathbf{y=}y_{1},\dots,y_{n}$,
the system can always be rearranged in the following form
\begin{equation}
\begin{array}{lll}
x\ge\mathbf{l}_{i}\mathbf{\cdot y}, &  & i=1,\dots,n_{\mathbf{l}},\\
x\le\mathbf{u}_{i}\cdot\mathbf{y}, &  & i=1,\dots,n_{\mathbf{u}},\\
0\le\mathbf{n}_{i}\mathbf{\cdot y}, &  & i=1,\dots,n_{\mathbf{n}},
\end{array}
\end{equation}
where $\mathbf{l}_{1},\dots,\mathbf{l}_{n_{l}},\mathbf{u}_{1},\dots,\mathbf{\mathbf{u}}_{n_{u}},\mathbf{n}_{1},\dots,\mathbf{n}_{n_{n}}\in\mathbb{R}^{n}$.
Furthermore, given $\mathbf{y}\in\mathbb{R}$, this system is solved
by $\mathbf{y}$ and some $x\in\mathbb{R}$ if and only if the following
system is solved by $\mathbf{y}$: 
\begin{equation}
\begin{array}{rcl}
\mathbf{l}_{i}\cdot\mathbf{y}\le\mathbf{u}_{j}\cdot\mathbf{y}, &  & i=1,\dots,n_{\mathbf{l}},\ j=1,\dots,n_{\mathbf{u}},\\
0\le\mathbf{n}_{i}\mathbf{\cdot y}, &  & i=1,\dots,n_{\mathbf{n}}.
\end{array}
\end{equation}

\end{thm}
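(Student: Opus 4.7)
The plan is to verify the two assertions of the theorem in sequence: first the rearrangement into the stated normal form, and then the equivalence of solvability.

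For the rearrangement, write each given inequality in the canonical form $a x + \mathbf{b}\cdot\mathbf{y} \le c$ (flipping signs whenever the original is a $\ge$). Partition the list according to the sign of the coefficient $a$ of $x$. If $a>0$, divide through to obtain $x \le (c - \mathbf{b}\cdot\mathbf{y})/a$, which after absorbing the constant term into an extra coordinate of $\mathbf{y}$ takes the form $x \le \mathbf{u}_i\cdot\mathbf{y}$. If $a<0$, dividing and flipping gives $x \ge \mathbf{l}_i\cdot\mathbf{y}$. If $a=0$, the inequality is purely in $\mathbf{y}$ and can be written as $0 \le \mathbf{n}_i\cdot\mathbf{y}$. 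This produces the three families listed in the theorem.

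For the forward direction of the equivalence, suppose $(x,\mathbf{y})$ solves the first system. Then for every $i,j$ we have $\mathbf{l}_i\cdot\mathbf{y} \le x \le \mathbf{u}_j\cdot\mathbf{y}$ and consequently $\mathbf{l}_i\cdot\mathbf{y} \le \mathbf{u}_j\cdot\mathbf{y}$; the $\mathbf{n}$-constraints survive unchanged. Hence $\mathbf{y}$ solves the second system. For the converse, suppose $\mathbf{y}$ solves the second system. Set $L = \max_i \mathbf{l}_i\cdot\mathbf{y}$ (with $L = -\infty$ if $n_{\mathbf{l}} = 0$) and $U = \min_j \mathbf{u}_j\cdot\mathbf{y}$ (with $U = +\infty$ if $n_{\mathbf{u}} = 0$). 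The pairwise inequalities $\mathbf{l}_i\cdot\mathbf{y} \le \mathbf{u}_j\cdot\mathbf{y}$ give $L \le U$, so the interval $[L,U] \cap \mathbb{R}$ is nonempty; pick any $x$ in it. By construction this $x$ together with $\mathbf{y}$ satisfies every inequality of the first system.

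The only real wrinkle, if one can call it that, is bookkeeping around the degenerate cases in which one or both of the families $\{\mathbf{l}_i\}$, $\{\mathbf{u}_j\}$ is empty, or in which no inequality involves $x$ at all; the $\pm\infty$ convention for $L$ and $U$ handles all of these uniformly, and in the last case one simply copies $\mathbf{y}$ and appends an arbitrary $x$. Beyond this bookkeeping the argument is mechanical, and the construction of $x$ from $\mathbf{y}$ is precisely the content that makes Fourier--Motzkin elimination a genuine reduction rather than merely a necessary condition.
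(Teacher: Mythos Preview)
Your proof is correct and is the standard textbook argument for Fourier--Motzkin elimination. The paper itself does not supply a proof of this theorem; it is stated without proof as a classical tool and then invoked in the proof of Lemma~\ref{lem:2}. There is therefore nothing to compare against, and your write-up would serve perfectly well as the omitted justification. The one small point worth flagging is your remark about ``absorbing the constant term into an extra coordinate of $\mathbf{y}$'': the theorem as stated in the paper has no explicit constant terms, so strictly speaking this homogenization step is a mild extension of the statement rather than part of its proof; but it is harmless and indeed necessary if one wants the rearrangement claim to cover arbitrary affine inequalities.
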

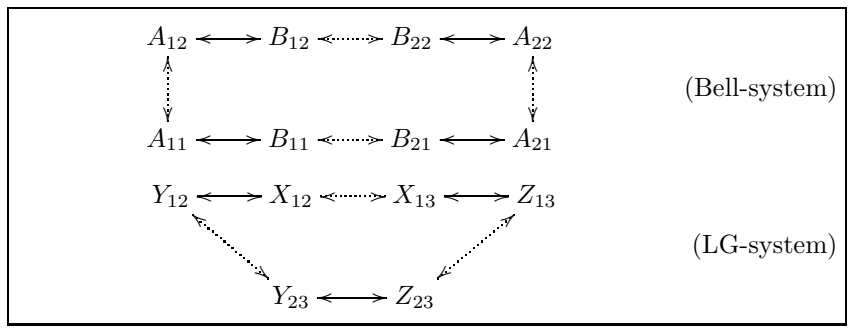
\begin{figure}
\begin{center}
  \fbox{\begin{minipage}[c]{.9\textwidth}%
      \[
      \begin{array}{c}
        \xymatrix{A_{12}\ar@{.>}[d]\ar[r] & B_{12}\ar@{.>}[r]\ar[l] & B_{22}\ar@{.>}[l]\ar[r] & A_{22}\ar@{.>}[d]\ar[l]\\
          A_{11}\ar@{.>}[u]\ar[r] & B_{11}\ar@{.>}[r]\ar[l] & B_{21}\ar@{.>}[l]\ar[r] & A_{21}\ar@{.>}[u]\ar[l]
        }
      \end{array}\tag{Bell-system}
      \]
      \[
      \begin{array}{c}
        \xymatrix{Y_{12}\ar@{.>}[dr]\ar[r] & X_{12}\ar@{.>}[r]\ar[l] & X_{13}\ar@{.>}[l]\ar[r] & Z_{13}\ar@{.>}[dl]\ar[l]\\
          & Y_{23}\ar@{.>}[ul]\ar[r] & Z_{23}\ar@{.>}[ur]\ar[l]
        }
      \end{array}\tag{LG-system}
      \]
  \end{minipage}}
\end{center}

\protect\caption[.]{Random variables involved in the Bell-system and LG-system. The pairs
of random variables whose joint distributions are empirically observed,
e.g., $\left(A_{12},B_{12}\right)$ and $\left(X_{12},Y_{12}\right)$,
are indicated by solid double-arrows. The pairs of random variables
forming probabilistic connections (with unobservable joint distributions)
are indicated by point double-arrows, e.g., $\left(A_{11},A_{12}\right)$
and $\left(X_{12},X_{13}\right)$. \label{fig:A-representation-of} }
\end{figure}

\begin{lem}
\label{lem:1-1}The necessary and sufficient condition for the connection
expectations $\left\langle X_{12}X_{13}\right\rangle $, $\left\langle Y_{12}Y_{23}\right\rangle $,
$\left\langle Z_{13}Z_{23}\right\rangle $ to be compatible with the
observed expectations $\left\langle X_{12}Y_{12}\right\rangle $,
$\left\langle X_{13}Z_{13}\right\rangle $, $\left\langle Y_{23}Z_{23}\right\rangle $,
$\left\langle X_{12}\right\rangle $, $\left\langle X_{13}\right\rangle $,
$\left\langle Y_{12}\right\rangle $, $\left\langle Y_{23}\right\rangle $,
$\left\langle Z_{13}\right\rangle $, $\left\langle Z_{23}\right\rangle $
is 
\begin{equation}
s_{1}\left(\left\langle X_{12}Y_{12}\right\rangle ,\left\langle X_{13}Z_{13}\right\rangle ,\left\langle Y_{23}Z_{23}\right\rangle ,\left\langle X_{12}X_{13}\right\rangle ,\left\langle Y_{12}Y_{23}\right\rangle ,\left\langle Z_{13}Z_{23}\right\rangle \right)\le4.\label{eq:compatibility-1}
\end{equation}
where
\begin{equation}
\begin{split}
  s_{1}\left(a,b,c,d,e,f\right) = \max \{ &\left(\pm a\pm b\pm c\pm d\pm e\pm f\right):\\
  &\textnormal{ the number of minuses is odd}\,\} .\label{eq:s1-1}
\end{split}
\end{equation}
\end{lem}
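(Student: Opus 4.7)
The plan is to imitate the proof of Lemma~\ref{lem:1}, working now with the six random variables of the LG-system instead of the eight variables of the Bell-system. The joint distribution of $X_{12}, X_{13}, Y_{12}, Y_{23}, Z_{13}, Z_{23}$ is specified by a probability vector $\mathbf{q} \in [0,1]^n$ with $n = 2^6 = 64$ entries summing to $1$. I would then introduce the marginal vector $\mathbf{p} \in [0,1]^m$ of size $m = 24$, holding the four joint probabilities for each of the three \emph{observed} pairs $(X_{12},Y_{12})$, $(X_{13},Z_{13})$, $(Y_{23},Z_{23})$ together with the four joint probabilities for each of the three \emph{connection} pairs $(X_{12},X_{13})$, $(Y_{12},Y_{23})$, $(Z_{13},Z_{23})$. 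Since every entry of $\mathbf{p}$ is a $2$-marginal of $\mathbf{q}$, there is a $0/1$ matrix $M$ with $\mathbf{p} = M\mathbf{q}$.

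The compatibility of the listed observed expectations with the connection expectations is then equivalent to the existence of $\mathbf{q} \ge 0$ summing to $1$ with $\mathbf{p} = M\mathbf{q}$; this describes a polytope whose vertices are the columns of $M$. I would feed this $M$ into the same facet enumeration algorithm cited in \citep[Text S3]{DK2013PLoS} to obtain its halfspace description, then translate the resulting halfspaces from the probability parameterization into the expectation parameterization $(\langle X_{12}\rangle,\ldots,\langle Z_{23}\rangle,\langle X_{12}Y_{12}\rangle,\langle X_{13}Z_{13}\rangle,\langle Y_{23}Z_{23}\rangle,\langle X_{12}X_{13}\rangle,\langle Y_{12}Y_{23}\rangle,\langle Z_{13}Z_{23}\rangle)$.

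Just as in Lemma~\ref{lem:1}, the equations arising from normalization and marginal consistency become identically true in the expectation parameterization, and among the halfspace inequalities a batch of $24$ reduces to the trivial implicit bounds $-1+|\langle U\rangle+\langle V\rangle|\le\langle UV\rangle\le 1-|\langle U\rangle-\langle V\rangle|$ for the six pairs $(U,V)$ involved; these encode only nonnegativity of the probabilities and are suppressed from the statement. I expect the remaining nontrivial facet inequalities to collapse, via the symmetries $-1\le\langle U\rangle,\langle V\rangle,\langle UV\rangle\le 1$ under sign flips, into precisely the $2^5 = 32$ inequalities $\pm\langle X_{12}Y_{12}\rangle\pm\langle X_{13}Z_{13}\rangle\pm\langle Y_{23}Z_{23}\rangle\pm\langle X_{12}X_{13}\rangle\pm\langle Y_{12}Y_{23}\rangle\pm\langle Z_{13}Z_{23}\rangle \le 4$ with an odd number of minus signs, which is exactly $s_1(\cdots)\le 4$ in the notation of \eqref{eq:s1-1}.

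The main obstacle is the last step: confirming that the facet enumeration output, once cleaned up, is really the single symmetric family defined by $s_1$. Intuitively this is expected from the symmetry of the six-cycle graph in Figure~\ref{fig:A-representation-of} (Bell--CHSH-type inequalities for cycles of even length are known to be governed by exactly such "odd sign flips" patterns), but in the present setting it must be verified from the computed half-space representation. The reduction of the $32$ sign-flipped inequalities to the compact form \eqref{eq:compatibility-1} is then a purely algebraic rewriting.
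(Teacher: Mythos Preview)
Your proposal is correct and follows essentially the same computer-assisted route as the paper: set up the $2^6$-dimensional probability vector, project to the $24$ two-marginals, run facet enumeration, and translate to expectation coordinates. The paper reports exactly the counts you anticipate --- $12$ equations plus $56$ inequalities, of which $24$ are the trivial bounds \eqref{eq:implicit} for the six pairs and the remaining $32$ are the odd-parity sign combinations encoded by $s_1(\cdots)\le 4$ --- so there is no substantive difference in approach.
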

\begin{proof}
The details are analogous to those of the proof of Lemma~\ref{lem:1}.
The polytope in terms of probabilities is defined by 12 equations
and 56 inequalities. The 12 equations correspond to the requirement
that the 1-marginals of the observable probabilities agree with those
of the connections and that the observable probabilities are properly
normalized. Expressing the probabilities in terms of the observable
and connection expections, the 16 equations become identically true
and of the 56 inequalities, 32 turn into those represented by (\ref{eq:s1-1})
and the remaining 24 correspond to the trivial constraints of the
form (\ref{eq:implicit}) for the 6 pairs of random variables appearing
in (\ref{eq:s1-1}).\end{proof}
\begin{lem}
\label{lem:2-1}If the connection expectations $\left\langle X_{12}X_{13}\right\rangle ,\left\langle Y_{12}Y_{23}\right\rangle ,\left\langle Z_{13}Z_{23}\right\rangle $
are compatible with the observed expectations 
$$\left\langle X_{12}Y_{12}\right\rangle ,\left\langle X_{13}Z_{13}\right\rangle ,\left\langle Y_{23}Z_{23}\right\rangle, \left\langle X_{12}\right\rangle ,\left\langle X_{13}\right\rangle ,\left\langle Y_{12}\right\rangle ,\left\langle Y_{23}\right\rangle ,\left\langle Z_{13}\right\rangle ,\left\langle Z_{23}\right\rangle ,$$
then, with $\Delta'$ defined as in \eqref{eq:LG-Delta}, 
\begin{equation}
\begin{split}
\Delta'&\textstyle\ge-\frac{1}{2}+\frac{1}{2}s_{1}\left(\left\langle X_{12}Y_{12}\right\rangle ,\left\langle X_{13}Z_{13}\right\rangle ,\left\langle Y_{23}Z_{23}\right\rangle \right),\\
\Delta'&\textstyle\ge\frac{1}{2}\left(\left|\left\langle X_{12}\right\rangle -\left\langle X_{13}\right\rangle \right|+\left|\left\langle Y_{12}\right\rangle -\left\langle Y_{23}\right\rangle \right|+\left|\left\langle Z_{13}\right\rangle -\left\langle Z_{23}\right\rangle \right|\right),\\
\Delta'&\textstyle\le3-\left[-\frac{1}{2}+\frac{1}{2}s_{0}\left(\left\langle X_{12}Y_{12}\right\rangle ,\left\langle X_{13}Z_{13}\right\rangle ,\left\langle Y_{23}Z_{23}\right\rangle \right)\right],\\
\Delta'&\textstyle\le3-\frac{1}{2}\left(\left|\left\langle X_{12}\right\rangle +\left\langle X_{13}\right\rangle \right|+\left|\left\langle Y_{12}\right\rangle +\left\langle Y_{23}\right\rangle \right|+\left|\left\langle Z_{13}\right\rangle +\left\langle Z_{23}\right\rangle \right|\right).
\end{split}\label{eq:Delta_system-1}
\end{equation}
Conversely, if these inequalities are satisfied for a given value
of $\Delta'$, then the connection expectations $\left\langle X_{12}X_{13}\right\rangle ,\left\langle Y_{12}Y_{23}\right\rangle ,\left\langle Z_{13}Z_{23}\right\rangle $
can always be chosen so that they are compatible with the observable
expectations 
$$\left\langle X_{12}Y_{12}\right\rangle ,\left\langle X_{13}Z_{13}\right\rangle ,\left\langle Y_{23}Z_{23}\right\rangle ,\left\langle X_{12}\right\rangle ,\left\langle X_{13}\right\rangle ,\left\langle Y_{12}\right\rangle ,\left\langle Y_{23}\right\rangle ,\left\langle Z_{13}\right\rangle ,\left\langle Z_{23}\right\rangle $$
and yield the given value of $\Delta'$ in \eqref{eq:LG-Delta}.\end{lem}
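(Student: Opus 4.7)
The plan is to imitate the proof of Lemma~\ref{lem:2} in the three-variable Leggett--Garg setting. I would start from Lemma~\ref{lem:1-1}, whose inequality \eqref{eq:compatibility-1} spells out as $32$ explicit signed-sum inequalities in the six variables
$\bigl(\langle X_{12}Y_{12}\rangle,\langle X_{13}Z_{13}\rangle,\langle Y_{23}Z_{23}\rangle,\langle X_{12}X_{13}\rangle,\langle Y_{12}Y_{23}\rangle,\langle Z_{13}Z_{23}\rangle\bigr)$,
and adjoin the $24$ implicit trivial constraints \eqref{eq:implicit} attached to the three observed pairs and the three connection pairs. Next, rewriting the definition \eqref{eq:LG-Delta} as the affine identity
\[
\langle X_{12}X_{13}\rangle + \langle Y_{12}Y_{23}\rangle + \langle Z_{13}Z_{23}\rangle \;=\; 3 - 2\Delta',
\]
I would use it to eliminate one of the three connection expectations by substitution, then apply Fourier--Motzkin elimination (Theorem~\ref{thm:Fourier-Motzkin}) to remove the other two, pruning redundant constraints after each pass by the linear-programming procedure cited in the proof of Lemma~\ref{lem:2}.

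I expect the four bounds in \eqref{eq:Delta_system-1} to emerge with a transparent origin. The $8$ inequalities from \eqref{eq:compatibility-1} whose signs on $(\langle X_{12}X_{13}\rangle,\langle Y_{12}Y_{23}\rangle,\langle Z_{13}Z_{23}\rangle)$ are all $+$ force the total minus-count among the three observed correlations to be odd; after the substitution $d+e+f=3-2\Delta'$ they collapse exactly to
\(\Delta'\ge -\tfrac12 + \tfrac12 s_{1}(\langle X_{12}Y_{12}\rangle,\langle X_{13}Z_{13}\rangle,\langle Y_{23}Z_{23}\rangle)\).
Symmetrically, the $8$ all-minus patterns give the matching upper bound in terms of $s_0$. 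Meanwhile, summing the three implicit lower bounds $-1+|\langle X_{12}\rangle+\langle X_{13}\rangle|\le\langle X_{12}X_{13}\rangle$ (and analogues) yields, after the substitution, the second inequality of \eqref{eq:Delta_system-1}, and summing the three implicit upper bounds yields the fourth.

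The main obstacle is the remaining $24$ inequalities from \eqref{eq:compatibility-1}, namely those with mixed signs on $(\langle X_{12}X_{13}\rangle,\langle Y_{12}Y_{23}\rangle,\langle Z_{13}Z_{23}\rangle)$: these do not collapse under the substitution $d+e+f = 3-2\Delta'$ because the sign pattern does not factor through the sum. I expect them to become redundant only after they are combined with the implicit bounds on the surviving two connection variables during the Fourier--Motzkin passes; this is the step that must be mechanically verified, and it is precisely what the linear-programming pruning from \citep[Text~S3]{DK2013PLoS} checks. Once that redundancy is confirmed, the Fourier--Motzkin equivalence automatically delivers the converse direction: whenever $\Delta'$ and the observed expectations satisfy \eqref{eq:Delta_system-1} (together with the implicit constraints for the three observed pairs, which survive unchanged), admissible values of the three connection expectations exist that realize the prescribed $\Delta'$.
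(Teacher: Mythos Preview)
Your plan is exactly the paper's: its proof of Lemma~\ref{lem:2-1} reads in full ``The details are analogous to those of the proof of Lemma~\ref{lem:2},'' i.e., take the polytope description from Lemma~\ref{lem:1-1} together with the implicit constraints, adjoin the affine identity for $\Delta'$, substitute out one connection expectation, and Fourier--Motzkin away the remaining two with LP pruning. Your heuristic tracing of where each bound comes from is more than the paper offers; note only that you have the second and fourth inequalities swapped in that paragraph---summing the three implicit \emph{upper} bounds $\langle X_{12}X_{13}\rangle\le 1-|\langle X_{12}\rangle-\langle X_{13}\rangle|$ (etc.) against $d+e+f=3-2\Delta'$ gives the second line of \eqref{eq:Delta_system-1}, and summing the implicit \emph{lower} bounds gives the fourth.
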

\begin{proof}
The details are analogous to those of the proof of Lemma~\eqref{lem:2}.
\end{proof}

\paragraph*{Acknowledgements. }

This work was supported by NSF grant SES-1155956 and AFOSR grant FA9550-14-1-0318. The authors are
grateful to J. Acacio de Barros and Gary Oas for numerous discussions
of issues related to contextuality. 

\bibliographystyle{apalike}
\bibliography{si.bib}

\end{document}